\title{FairFoody: Bringing in Fairness in Food Delivery}
\author {
Anjali Gupta,
Rahul Yadav,
Ashish Nair,\\
Abhijnan Chakraborty,
Sayan Ranu,
Amitabha Bagchi
}
\setlist{nolistsep,leftmargin=*}
\def\fmplus{\textsc{FoodMatch}\xspace}
\def\fmplusone{\textsc{Food-}\xspace}
\def\fmplustwo{\textsc{Match}\xspace}
\def\fair{\textsc{FairFoody}\xspace}
\def\fairone{\textsc{Fair-}\xspace}
\def\fairtwo{\textsc{Foody}\xspace}
\def\abhi{\textsc{2sf}\xspace}
\algnewcommand{\IIf}[1]{\State\algorithmicif\ #1\ \algorithmicthen}
\algnewcommand{\EndIIf}{\unskip\ \algorithmicend\ \algorithmicif}
\newtheorem{theorem}{\textbf{Theorem}}
\newtheorem*{theorem2*}{Theorem~2}
\newtheorem*{theorem3*}{Theorem~3}
\newcommand{\updatetext}[1]{\textcolor{black}{#1}}
\newtheorem{definition}{\textbf{Definition}}
\newtheorem{example}{\textbf{Example}}
\newtheorem{lemma}{\textbf{Lemma}}
\newtheorem{corollary}{\textbf{Corollary}}
\newtheorem{problem}{\textbf{Problem}}
\def\CV{\mathcal{V}}
\def\CR{\mathcal{R}}
\def\CL{\mathcal{L}}
\def\cc{\mathcal{C}}
\def\maxorders{\text{MaxO}}
\def\waitTime{\text{wT}}
\def\driveTime{\text{dT}}
\def\availTime{\text{aT}}
\def\inc{\text{inc}}
\def\ninc{\text{ns-inc}}
\begin{document}

\maketitle

\begin{abstract}
Along with the rapid growth and rise to prominence of food delivery platforms, concerns have also risen about the terms of employment of the ``gig workers'' underpinning this growth. Our analysis on data derived from a real-world food delivery platform across three large cities from India show that there is significant inequality in the money delivery agents earn. In this paper, we formulate the problem of fair income distribution among agents while also ensuring timely food delivery. We establish that the problem is not only \textit{NP-hard} but also \textit{inapproximable in polynomial time}. We overcome this computational bottleneck through a novel matching algorithm called \fair.  
Extensive experiments over real-world food delivery datasets show \fair imparts up to $10$ times improvement in equitable income distribution when compared to baseline strategies, while also ensuring minimal impact on customer experience.

\end{abstract}

\section{Introduction}
\noindent
Food delivery platforms like 
DoorDash, Zomato, GrubHub, Swiggy and Lieferando have become popular means for people to order food online and get delivery at their doorsteps, increasingly so due to the covid-19 pandemic related restrictions~\cite{fooddelivery-covid}. Typically, when a customer orders food from a particular restaurant a delivery agent is assigned to pick up the food from the restaurant once ready and deliver it to the customer. Thus, apart from providing business opportunities to the restaurants, food delivery platforms also provide livelihood to thousands of delivery agents.
However, recent media reports have highlighted a range of issues faced by  
these delivery agents: poor working conditions, long working hours, non-transparent job allocations, and meagre pay~\cite{delivery-agent-plight1,delivery-agent-plight2,fooddelivery-newsminute}. Due to the `gig' nature of delivery jobs, a delivery agent typically gets a small fixed commission per order (except occasional tips and other incentives), and few employment benefits. A recent survey by the non-profit Fairwork~\cite{fairwork} found that despite the gig labels associated with this work (denoting part time engagements in addition to more stable jobs), most delivery agents in developing countries actually work full time on these platforms, depending on them entirely for their livelihood~\cite{fairwork}. Fairwork also found that none of the food delivery platforms in India guarantee local minimum wage to the delivery agents even if they work for more than 10 hours per day~\cite{fairwork}. 

Increasing the pay of delivery agents is a complicated proposition. If we decrease the number of agents so that the per-agent pay increases, there is a danger of increased customer wait time, which is anathema in the food delivery sector. Charging higher commission from restaurants or offering higher pay-per-delivery may disproportionately affect smaller restaurants and decrease their customer base. Recently such concerns led the city of Chicago to cap the delivery charge at $10\%$ of the order value~\cite{chicago-fee-cap}. In some cities, restaurants are offering their own delivery services to cut back on the platform charges~\cite{fooddelivery-restaurant}. In summary, it is not easy to increase 
the pool of money available to remunerate the delivery agents. However, apart from the issue of a small pool of money, the Fairwork report suggests that there is also high variability in pay -- earnings on a platform can vary widely across agents 
~\cite{fairwork}. 
To investigate this issue further, we 
perform an in-depth analysis of data obtained from a major food delivery platform for three Indian cities. Our analysis shows that there is significant inequality in the amount of money different agents earn from the platform. Interestingly, we see that the number of working hours of an agent, or 
her hours of operation cannot explain this inequality -- it is rather the catchment area which makes a major difference. 

In this paper, we present \fair, the first algorithm to address the food delivery order allocation problem with fairness of pay distribution as a goal.
We address the issue of catchment-based inequality by doing away with the restriction that a delivery agent must work within a single zone. Our algorithm, \fair, does not, however, make any special effort to move agents across zones. In fact, in order to ensure timely delivery it limits the range from which a delivery agent can be picked to deliver a particular order. But, nonetheless, 
to ensure a more equitable pay distribution, \fair ends up creating a more uniform geographical distribution of agent activity. 

The challenge we faced in designing \fair is that at any given time the number of orders may not be large enough to ensure that every idle agent is kept busy, and, hence, fairly remunerated. To deal with this issue we drew on Fairwork's finding that a number of delivery agents treat food delivery as a full-time job to relax this temporal constraint: since the orders arrive throughout the day, we {\it amortize fairness over a longer period of time than trying to be fair at each assignment}. In our scheme if an agent does not get her fair share of assignments in a given 
time period, she may still make up for it on subsequent periods, 
and get a fair 
income over a longer term. With this relaxation \fair is able to fairly remunerate those delivery agents who spend a significant period of time on the platform, and, consequently, rely on it to provide them with a living wage. 

In summary, 
our key contributions are as follows:

\begin{itemize}
\item \textbf{In-depth investigation of food-delivery data: } We perform the first in-depth study of real food delivery data from large metropolitan cities 
and establish that the income distribution, even when normalized by the active hours of a delivery agent, shows high levels of inequality. 

\item \textbf{Problem formulation and algorithm design:} We formulate the \textit{multi-objective optimization} problem of fair income distribution in food delivery assignment, without compromising on customer experience. We show that the problem is not only \textit{NP-hard}, but also \textit{inapproximable } in polynomial time. To mitigate this computational bottleneck, we develop an algorithm called \fair that uses \textit{bipartite matching} on a data stream to perform real-time fair assignment of orders. 
ours is the first 
proposal to ensure fairness in food-delivery.

\item \textbf{Empirical evaluation:} We evaluate \fair extensively on real food delivery data across a range of metrics, and establish that it is successful in its dual objective of fair income distribution and positive customer experience.
\end{itemize}

\if 0

 Typically, . So effectively, the delivery agents would like to get  and the customers would like to get the food within a stipulated time.

1. Can we ensure the delivery agents getting similar (equal) income (normalized by their active hours)? If not, what would be basis of the inequality? Can their starting/home location be eliminated as a contributing factor? The idea would be to look at the zones and only compare the delivery agents' income within a zone.

Relating this to our minimising the difference kind of optimisation problem, I see two possibilities. (i) We stay with our current metric that minimises the normalised (by time) spread of income across workers and we add "fraction of people above minimum wage" as a metric for this method. (ii) We shift to a metric that tries to maximise the number of workers above minimum wage. In both cases the questions arise: how do we set the minimum wage and how do we compute the cost borne by the worker? My feeling is that minimising the spread the way we are currently doing it is a good way to go. If, despite minimising the spread, we find that very few workers are crossing minimum wage, it signals a bigger problem in the system, i.e., that Swiggy is not compensating adequately overall, and that it needs to raise it's per unit payments.

The ...we cannot guarantee a minimum wage without controlling the number of vehicles I agree that we should stay on our chosen course...and add the "fraction of people above minimum wage" as a metric. I think our optimization objective (minimizing the gap between max and min) also allows us to achieve another notion of fairness -- "Rawlsian Egalitarianism" which (very naively speaking) states that we should try to maximize the minimum income.  

2. What is the impact of batching?
3. What would be the impact on customer waiting time if the platform decides to equalize the agents' incomes (or their number of deliveries)?

Earnings on a platform can vary widely across workers, and by work location and time of day. “Incentives” play an important role in bringing about these variations in several platforms. Shifts in income depend first on how big a chunk of the earnings comes from incentives and this is different for different platforms. For instance, incentives could constitute as much as 40-50 percent of earnings on some of the food delivery platforms.

When in our proposal we say that we want to reduce the income gap, that also means that we are proposing to minimize the variability in pay -- which, it seems, what the drivers also want

\fi

\begin{figure*}[t]
\centering  
\subfigure[{Payment}]{\label{fig:lorenz_pay} \includegraphics[width=1.33in]{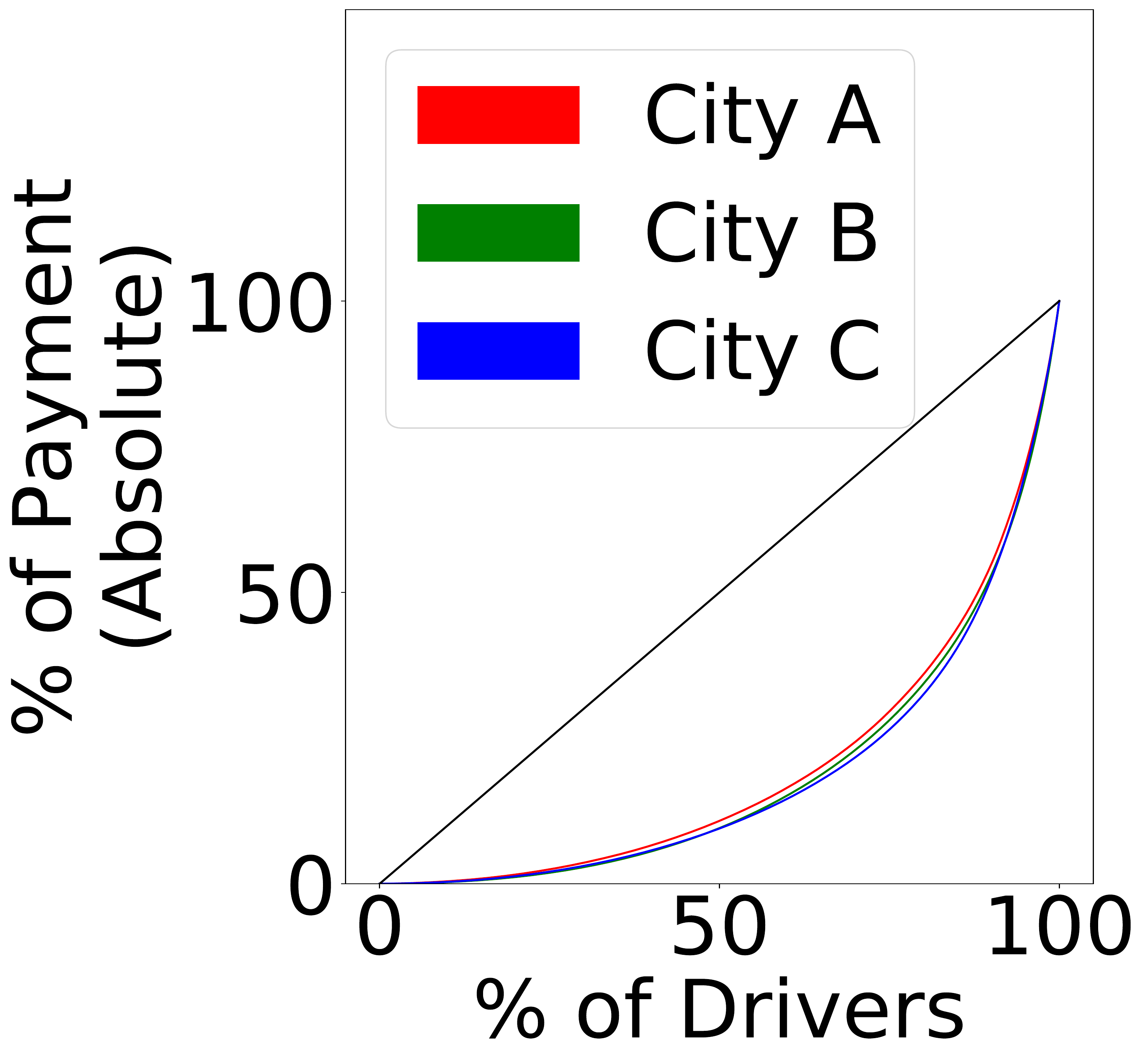}\label{fig:foodmatch_gini}}
\hfill
\subfigure[{ Payment/hour}]{\label{fig:lorenz_normalized_pay} \includegraphics[width=1.33in]{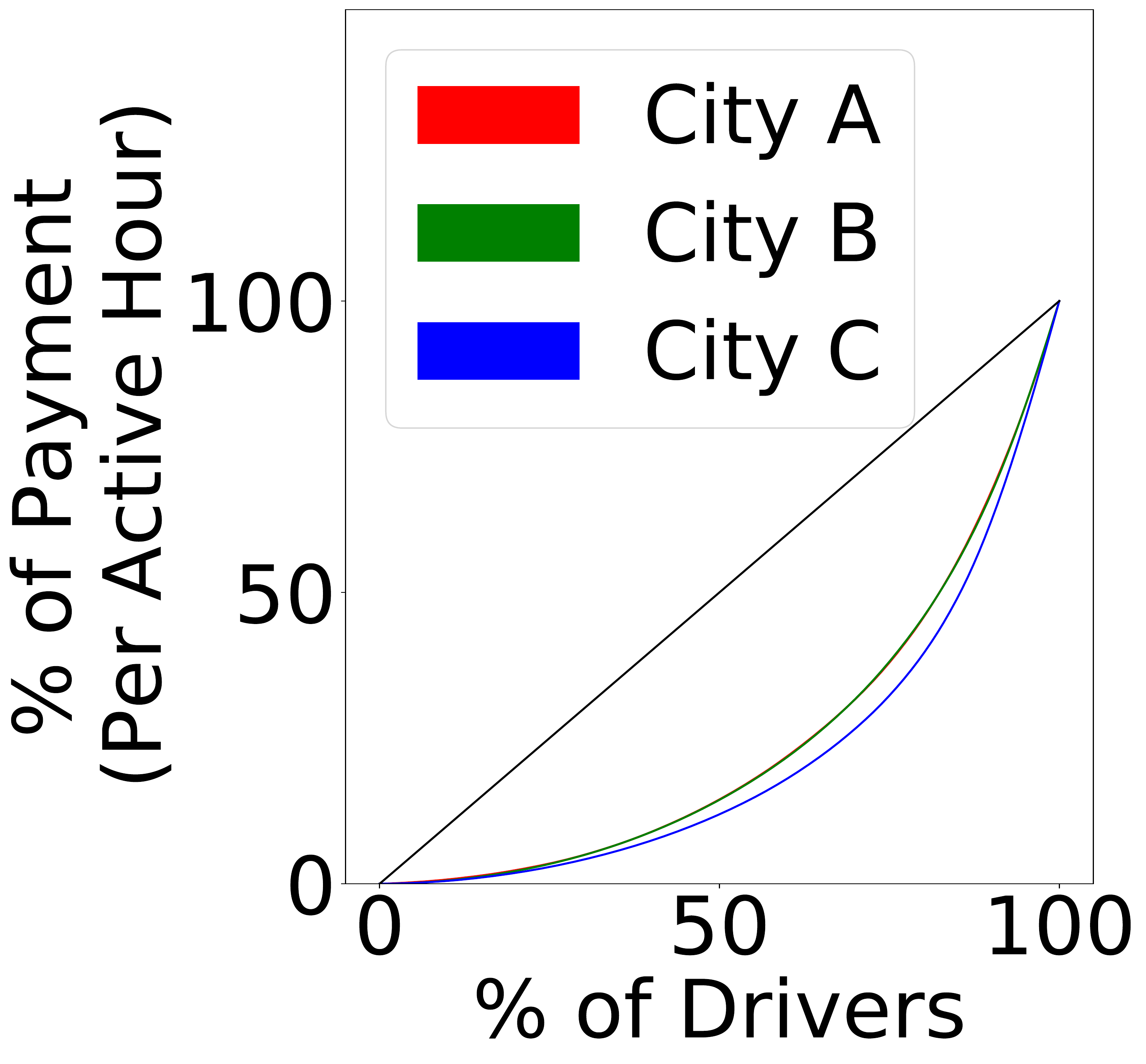}\label{fig:foodmatch_gini_norm}}
\subfigure[{ City A}]{\includegraphics[width=1.33in]{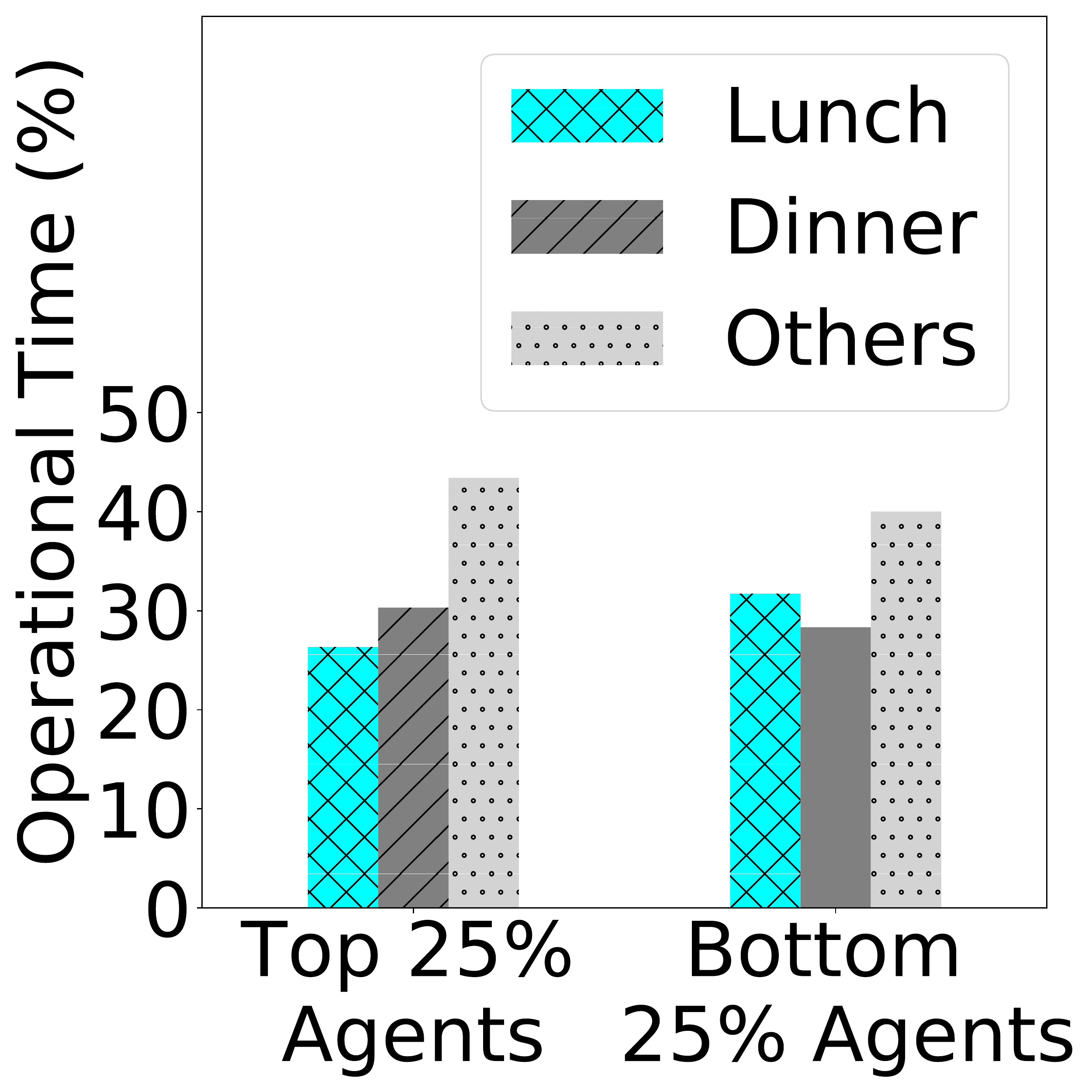}}\hfill
\subfigure[{ City B}]{\includegraphics[width=1.33in]{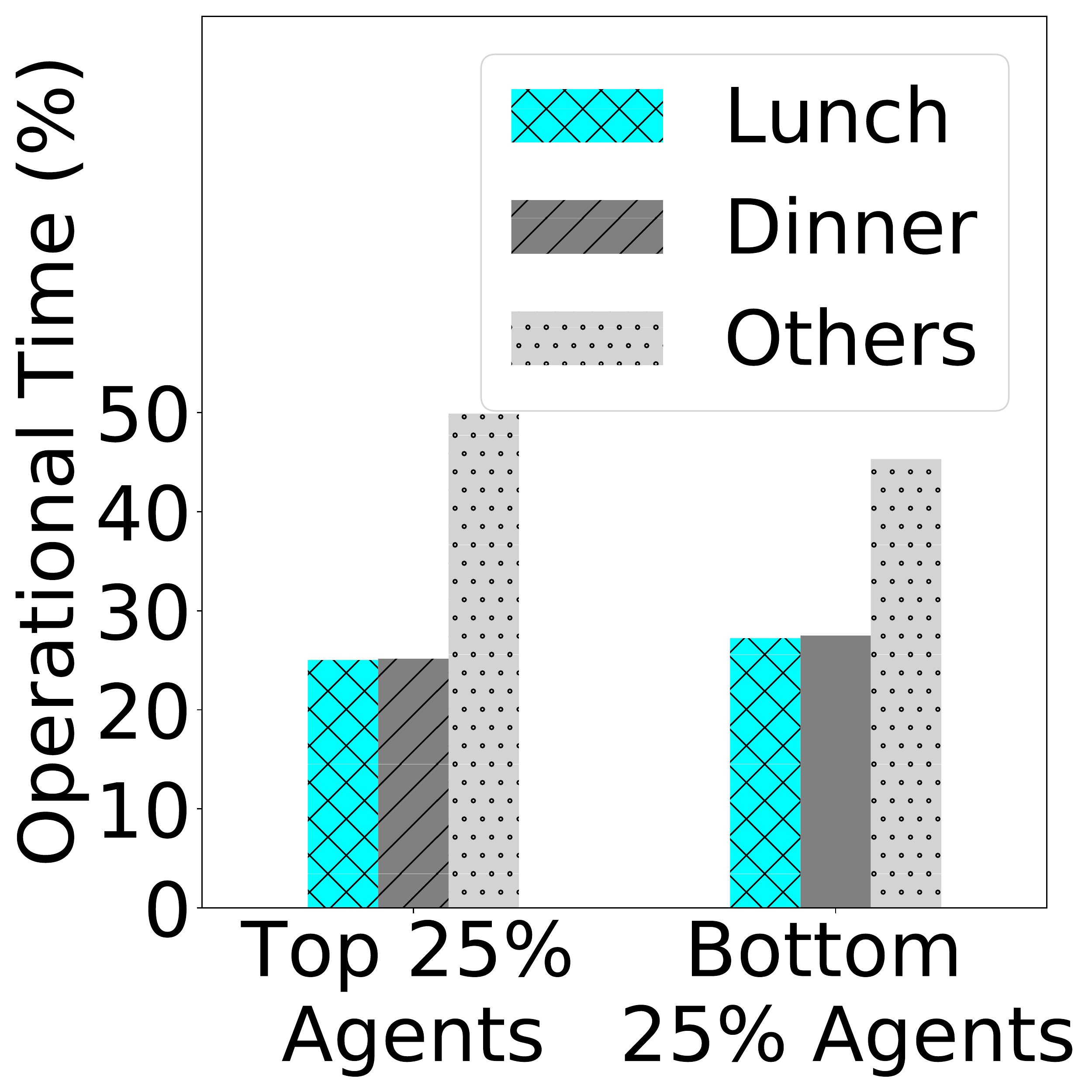}}\hfill
\subfigure[{ City C}]{\includegraphics[width=1.33in]{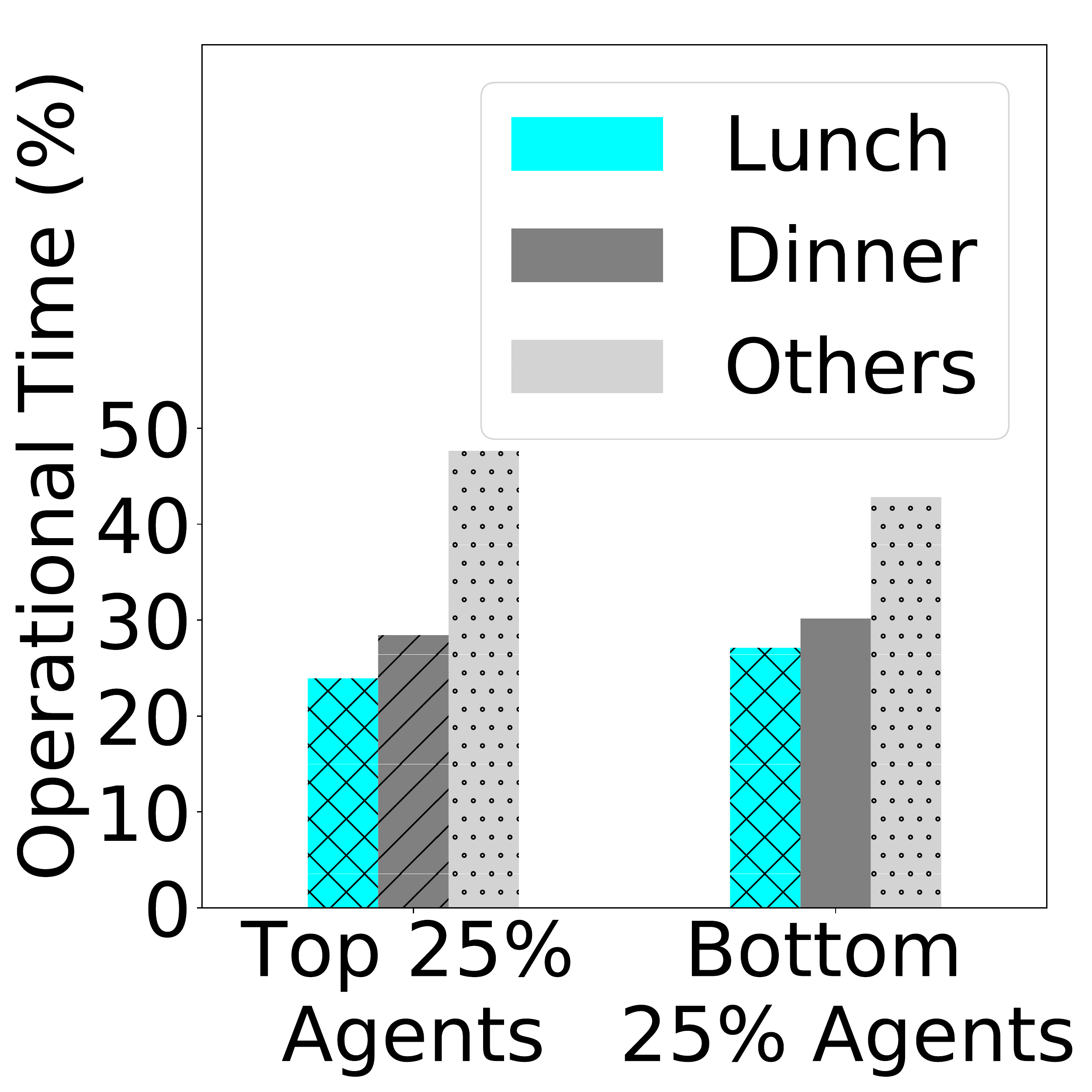}}
\caption{(a-b) Lorenz curves for the distribution of agent incomes. (c-e) Distribution of operational periods for top earners and bottom earners.}
\label{fig:op_time_dist}
\end{figure*}

\section{Inequality in Delivery Agents’ Income}
\subsection{Dataset Used}
\label{sec:dataset}
\noindent
We use six days of food delivery data from three large Indian cities, 
provided to us by 
a major food delivery service provider in India. 
The six days span Wednesday to Monday and therefore includes both weekdays and weekends.Table~\ref{tab:dataset} summarizes the dataset characteristics. The dataset consists of three components: trajectories of the delivery vehicles, the road network of each city (obtained from {\tt OpenStreetMap.org}), and metadata describing various factors such as the vehicle IDs, information 
on each received order, locations of restaurants and customers, mean food preparation time in each restaurant, average speed in each road segment at different hours, 
etc. 
We match the vehicle GPS pings to the road network 
to obtain network-aligned trajectories~\cite{mapmatch}. 

Each of these three cities provides different flavors of customer behavior. 
While City B and City C are large metropolitan cities with more than 8 million residents, City A is a  
comparatively smaller city with 5 million inhabitants and lower order volume. 
Although City C has a larger number of restaurants, $41\%$ more orders were fulfilled in City B in the time period under consideration. Furthermore, $27\%$ more vehicles were employed in City B to cope with the higher order volume. 

\subsection{Inequality in Payment Distribution}
\noindent
We perform allocation using \fmplus~\cite{foodmatch_arxiv} on the described dataset and compute the payment earned by different delivery agents over the course of these $6$ days.  Fig.~\ref{fig:lorenz_pay} shows the Lorenz curve of their income, where the $y$-axis represents the cumulative percentage of total income and the $x$-axis represents the cumulative percentage of all agents. The diagonal line in Fig.~\ref{fig:lorenz_pay} is the equality line; the further the income distribution from this line, the higher is the inequality. We see in Fig.~\ref{fig:lorenz_pay} that there is high inequality in the income earned by the agents; the top $10\%$ earners get $50\%$ of the total payment while the bottom $60\%$ only get $10\%$ of it. Such a high inequality can starve many agents from getting decent income and would force them to quit the platform.

\begin{table}[t]
\centering
\scalebox{0.9} {
\begin{tabular}{lccc} 
    \hline
    {\bf } & {\bf City A} & {\bf City B} & {\bf City C}\\ \hline 
    {\bf\# Restaurants} & $2085$ & $6777$ & $8116$\\
    {\bf\# Vehicles} & $2454$ & $159160$ & $10608$\\ 
    {\bf\# Orders} & $23442$ & $112745$ & $112745$\\
    {\bf Food prep. time} (avg.in min) & $8.45$ & $9.34$ & $10.22$\\
    {\bf\# Nodes} & 39k & 116k & 183k\\
    {\bf\# Edges} & 97k & 299k & 460K\\
 \hline
\end{tabular}}
\caption{Summary of the dataset.}
\label{tab:dataset}
\end{table}

\subsection{What Drives Inequality?}
\noindent
Next, we try to uncover the sources of such high inequality. 

\noindent
\textbf{I. Number of working hours: } It is reasonable to assume that pay variability springs from variability in the number of hours worked. However this is not the case in our data. We normalized payment by the number of active hours worked: Fig.~\ref{fig:lorenz_normalized_pay} shows the Lorenz curve of the hourly income. We can see in Fig.~\ref{fig:lorenz_normalized_pay} that the high inequality persists even after accounting for the activity levels.

\noindent
\textbf{II. Hours of operation: } We may conjecture that 
top earners are active during lunch (11AM-2PM) or dinner (7PM-11PM) times when order volumes are high, and the bottom earners are active during other periods of the day. Fig.~\ref{fig:op_time_dist}(c,d,e) show the distribution of the operational periods (fraction of active times during lunch, dinner and all other time periods) for both the top $25\%$ and bottom $25\%$ earners. We can see that except City A, there is no noticeable difference in their activity patterns. It is not 
that the top earners were overwhelmingly more active during lunch and dinner times compared to the bottom earners. In fact, in City A, we can see that the bottom earners were more active during lunch times. Thus, difference in activity period is not a reasonable explanation for the inequality.

\noindent
\textbf{III. Geographical distribution: } Next, we focus on the geographical spread of the order locations. Fig.~\ref{fig:foodmatch_heatmap} show heatmaps of restaurants' and customers' locations for the orders assigned to top earners and bottom earners, along with the location of the delivery agents when the orders were assigned to them. We can see a clear difference in the areas where top and bottom earners were active. For example, Fig.~\ref{fig:foodmatch_heatmap} shows that top earners deliver food mostly in the western part of City C, whereas the bottom earners are active more in the eastern part. It turns out that the order volume is higher in the western part, creating the inequality. We see similar trend in City A and City B as well. 

Our finding here corroborates the delivery agents' experience as reported by Fairwork: agents often complained about not receiving orders in areas other than their chosen pick-up zone~\cite{fairwork}. Even when an agent delivers food outside their zone, they are not allocated any orders on their way back, thus incurring fuel costs on their return journey, without getting any payment~\cite{fairwork}. We suggest that the delivery platforms can distribute opportunities more fairly among agents by allowing them to deliver orders in different parts of the city, as long as it does not negatively impact the waiting time for customers.

\begin{figure}[t]
\centering  
\subfigure[][Customers ]{\includegraphics[width=1.2in]{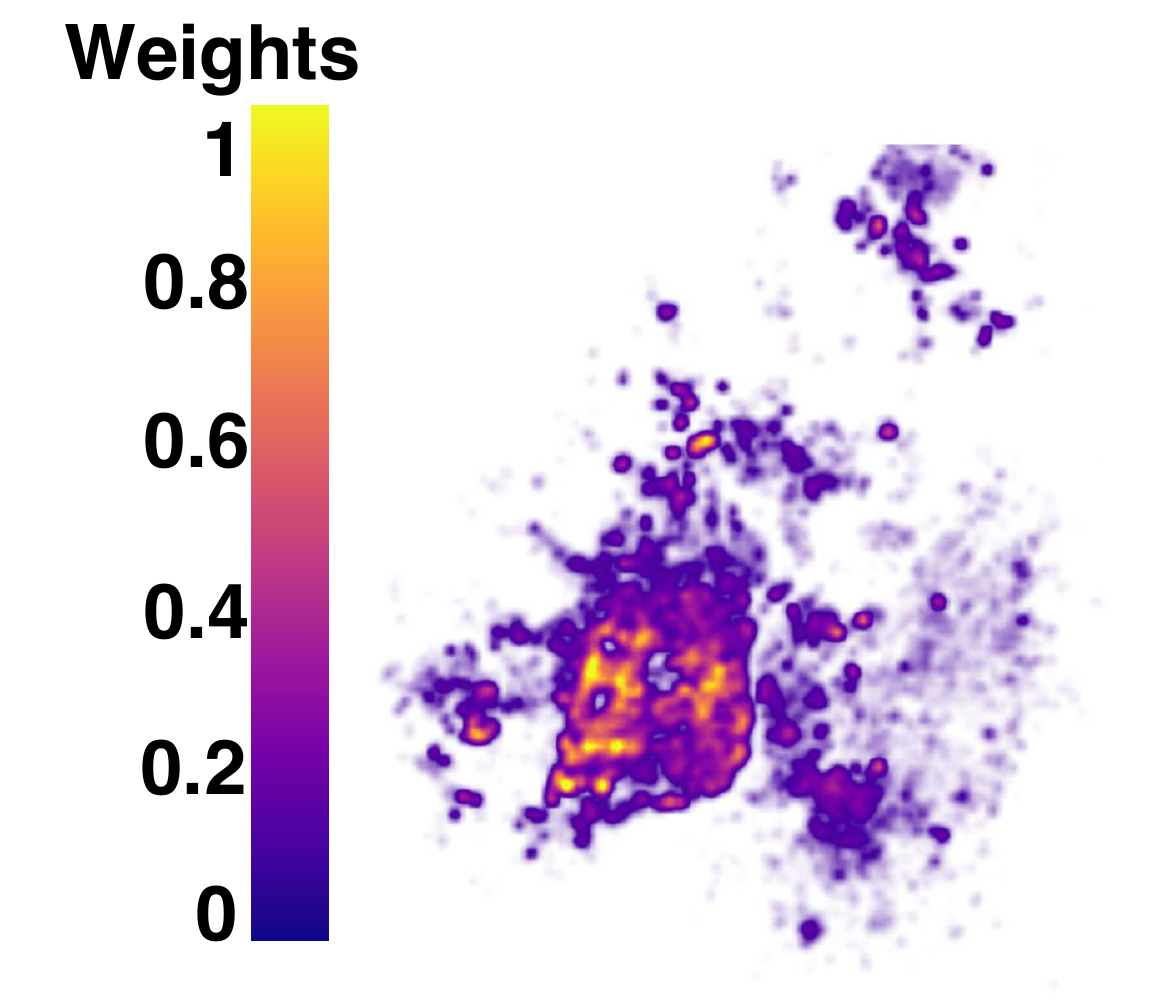}}
\subfigure[][Restaurants]{\includegraphics[width=1in]{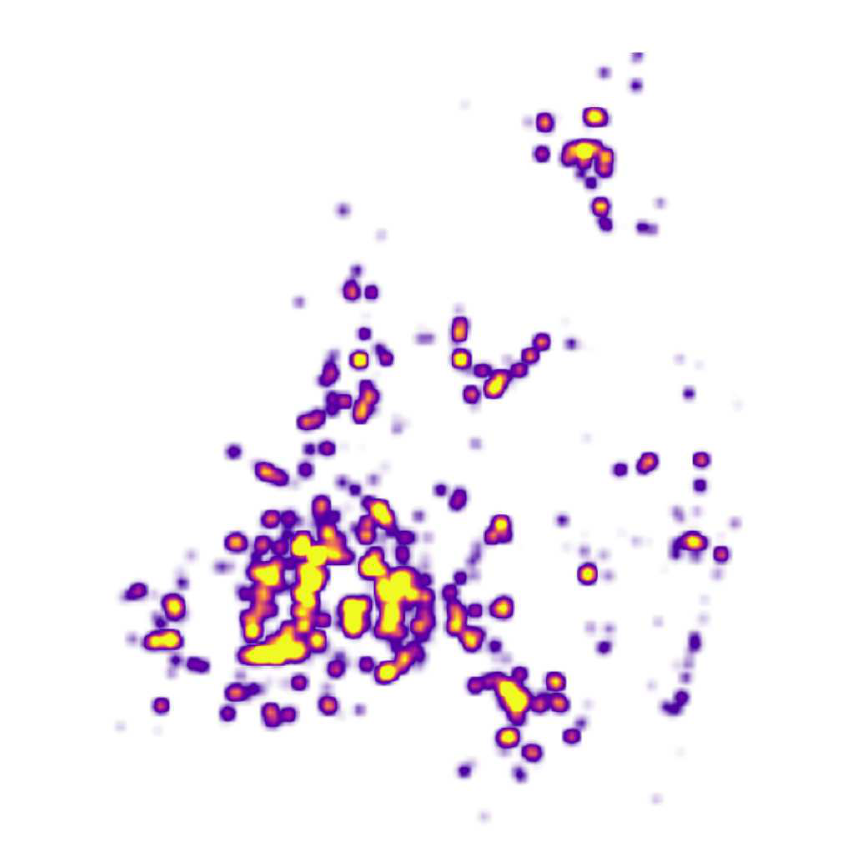}}
\subfigure[][Delivery Agents]{\includegraphics[width=1in]{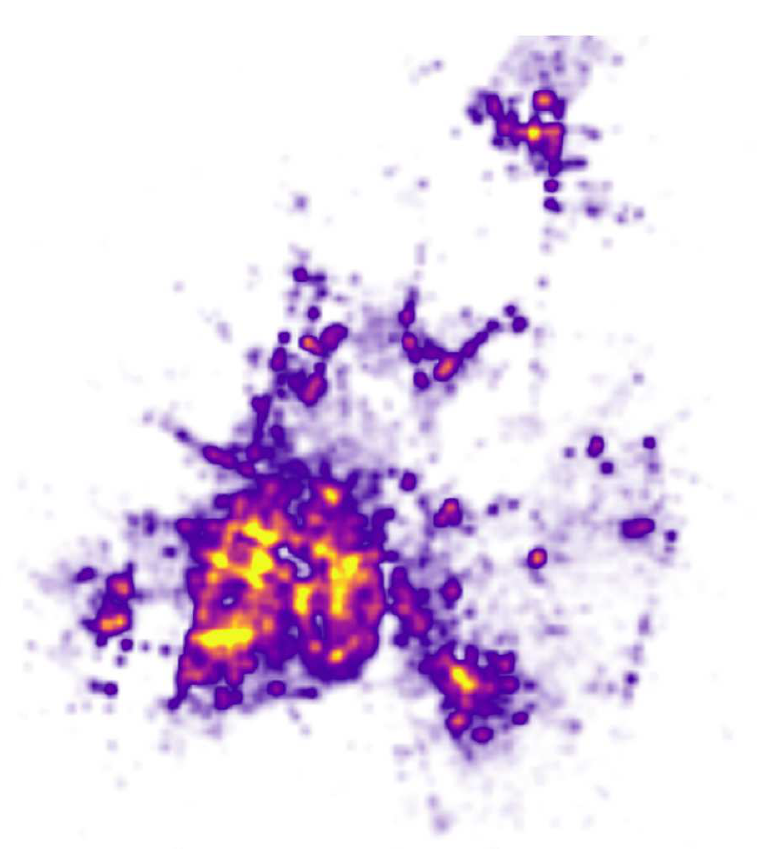}}\\
\subfigure[][Customers]{\includegraphics[width=1in]{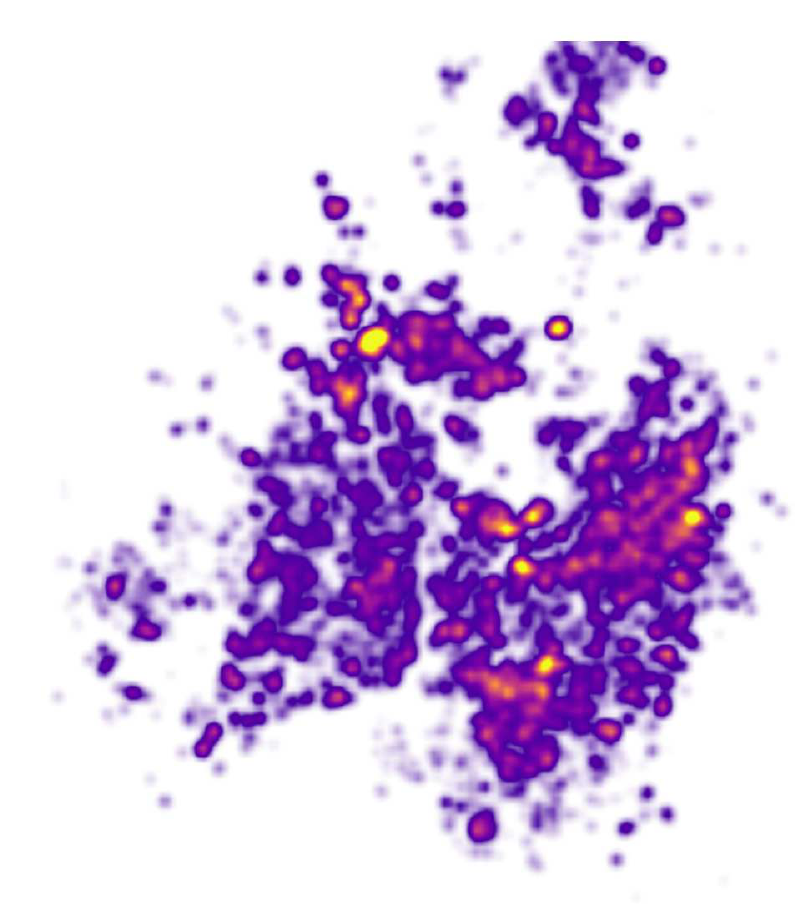}}
\subfigure[][Restaurants]{\includegraphics[width=1in]{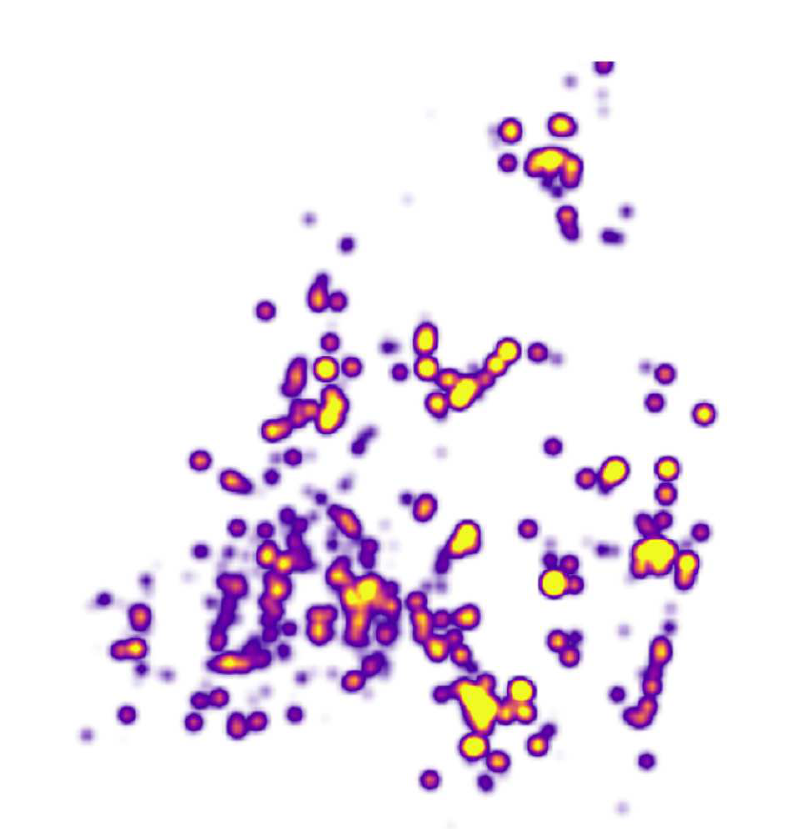}}
\subfigure[][Delivery Agents]{\includegraphics[width=1in]{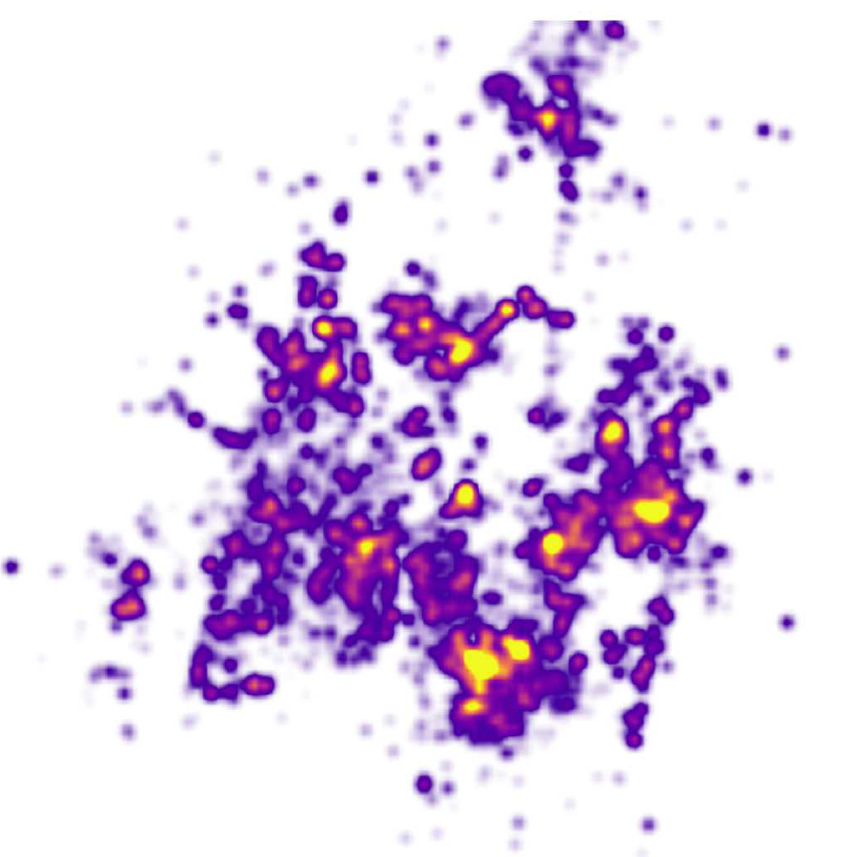}}
\caption{Heat map of order locations: the customers' locations, restaurants' locations and locations of the assigned delivery agents. In (a-c) we show this location data with respect to the orders serviced by the top $25\%$ agents and in (d-f) those serviced by bottom $25\%$ agents based on income.}
\label{fig:foodmatch_heatmap}
\end{figure}

\section{Ensuring Fairness in Food Delivery}
In this section, we define the concepts central to our work and formulate the problem of fair food delivery. 

\subsection{Fairness Notions}
Since food delivery platforms essentially {\it distribute income opportunities} among the delivery agents, the key question is: {\it what would constitute a fair/just distribution?} Fairness of distribution have been studied for a long time in Moral Philosophy, particularly in Distributive Justice~\cite{lamont2017distributive}. Next, we discuss few key principles from distributive justice and interpret them in the context of fair food delivery.

{\bf Strict Egalitarianism: } The underlying idea behind this fairness principle is that people are morally equal, and hence everyone should be treated equally~\cite{arneson2002egalitarianism}. In food delivery context, this would mean that every delivery agent should earn the same income from the platform. To implement this in practice, 
the platform should pull together all delivery fees collected and then distribute them equally among the agents. However, such schemes are practically untenable; more so due to the gig nature of delivery jobs.

{\bf Difference Principle: } In his seminal work on the theory of justice~\cite{rawls1971theory}, John Rawls defined a system to be just if those affected by the system agree to be subjected to it. Rawls permit a departure from equality only if it provides {\it `greatest benefit to the least advantaged members of society'}. In our context, this would translate into allocating the agent with the lowest income to a new order. However, this scheme would not consider the number of hours different agents work for the platform.


{\bf Proportional Equality: } Ronald Dworkin opposed the idea of complete equality and argued for eliminating inequality that happens by sheer luck, but allowing the impact of people's choice or hard work (known as `Luck Egalitarianism')~\cite{arneson2018dworkin}. In the food delivery context, agents' incomes should be proportional to their effort, i.e., the number of hours they are working. In this work, we consider this notion of {\it Proportional Equality}, 
and propose to ensure that an agent's income is proportional to the number of hours they work for the platform. 

\subsection{Background: The Food Delivery Problem}
\label{sec:inequality}
\noindent
Next, we formulate the problem of food delivery without any fairness consideration~\cite{foodmatch}. In general, the objective is to allocate orders to delivery agents such that the waiting time for customers is minimized. 

\begin{definition}[Road Network] 
\label{def:road-network}
\textit{A road network is a directed, edge-weighted graph $G = (V, E , \beta)$, where $V$ is the set of nodes representing regions, $E = \{(u, v): u,v \in V\}$ is the set of directed edges representing road segments connecting regions, and $\beta : (E,t) \mapsto \mathbb{R}^+$ maps each edge to a weight at time $t$. The edge weight at time $t$ denotes the expected time required to traverse the corresponding road at time $t$.}
\end{definition}
We use the notation $SP(u_i,u_{i+1},t)$ to denote the length of the \emph{shortest (quickest) path} from  $u_i$ to $u_{i+1}$ at time $t$.

\begin{figure}[t]
\centering
  \includegraphics[width=3.32in]{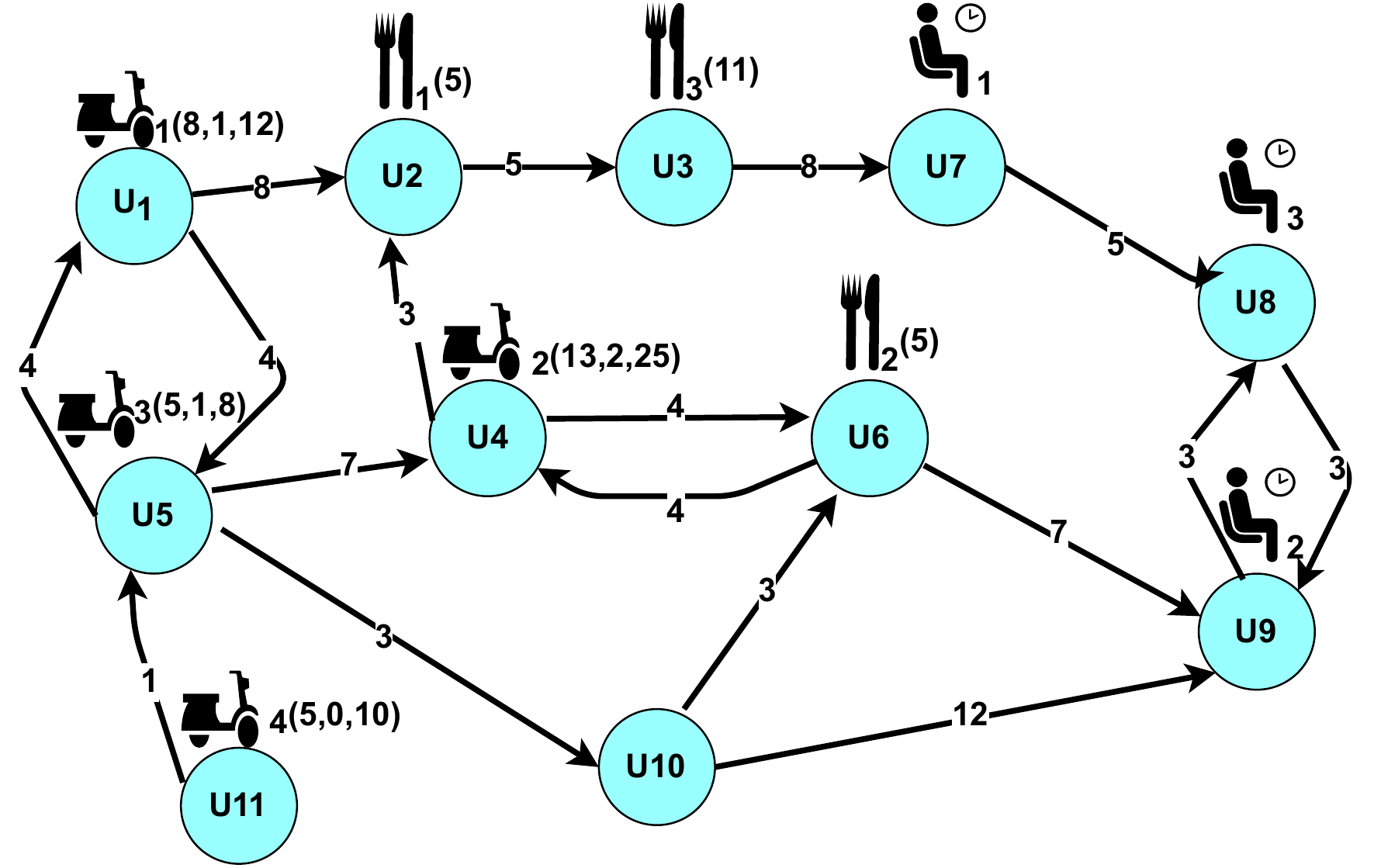}
  \caption{\bf A road network where the edge weights represent traversal times: vehicle icon represents delivery vehicle, cutlery icon represents restaurant and human icon represents customer location. Food preparation time at restaurant is in parentheses. Drive time, wait time and available time ($\driveTime,\waitTime, \availTime$) for the delivery agent is in parentheses next to the vehicle icons. The figure shows three order pick-up and drop-off points and three vehicles.}
  \label{fig:ex1}
\end{figure}
\begin{definition}[Food Order] 
\label{def:order}
\textit{A {\em food order} $o=\langle o^r, o^c, o^t,  o^p\rangle$ is characterized by four features.  $o^r\in V$ denotes the restaurant location (pick-up node), $o^c \in V$ is the customer location (drop-off node), $o^t$ is the time of request and $o^p$ is the (expected) food preparation time.}
\end{definition}


 Let $v$ be a food delivery vehicle. We use $O^v_t$ to denote the orders assigned to $v$. Furthermore, $loc(v,t)$ denotes the node that is closest to $v$ at time $t$. 
 We assume all vehicles have a maximum carrying capacity of $\maxorders$.  
Given a set of orders in $O^v_t$, a \textit{route plan} is a \textit{permutation} of $\{o_i^r, o_i^c: 1 \leq i \leq m\} \subset V$ such that for each $i$, $o_i^r$ appears before $o_i^c$ in the permutation. The \emph{length} of a route plan $RP=\{u_1,\cdots,u_m\}$  is $\sum_{i=1}^{m-1} SP(u_i,u_{i+1},t)$. The \emph{quickest} route plan is, therefore, the one with the smallest length. We assume that all vehicles always follow the quickest route plan, and hence any reference to a route plan is implicitly assumed to be the quickest one.

\begin{definition}[Order assignment] 
\label{def:assignment}
\textit{Given a set of orders $O$ and vehicles $\CV$, an order assignment function $A: O\rightarrow \CV$ assigns each order $o\in O$ to a vehicle $v\in\CV$. 
An order $o$ may be assigned to $v$ at time $t$ only if $|O^v_t| < \maxorders$.}
\end{definition}



Once order assignments are done, the \textit{first-mile} distance, $firstMile(o,v)$, of order $o$ is the distance from $v=A(o)$'s current location $loc(v,o^t)$ to the pick-up location $o^r$ in the route followed by $v$. Similarly, $lastMile(o,v)$ is the \textit{last-mile} distance from $o^r$ to drop-off location $o^c$.
\begin{example}
\label{ex:firstmile}
\textit{Let us consider Fig.~\ref{fig:ex1}. Suppose vehicle $v_1$, located at $u_1$, has been assigned to order $o_1$. $o_1$ needs to be picked up from the restaurant at $u_2$ and dropped-off at $u_7$. The quickest route for this task is $RP=\{u_1,u_2,u_3,u_7\}$.  
Thus, $firstMile(o_1,A)=8$ and $lastMile(o_1,A)=13$.} 
\end{example}

\begin{definition}[Expected Delivery Time]
\label{def:edt}
\textit{The expected delivery time of order $o$ when assigned to vehicle $v=A(o)$:}
\begin{alignat}{2}
\nonumber
EDT(o,v)&=\max\left\{ time(A(o))+firstMile\left(o,v\right),o^p\right\}\nonumber \\
\label{eq:edt}
&+lastMile\left(o,v\right)
\end{alignat}
\end{definition}
Here, $time(A(o,v))$ denotes the computation time taken by the assignment algorithm. To explain $EDT(o,v)$, the time to prepare food, and the time to assign a vehicle and reach the restaurant can progress in parallel. Thus, we take the maximum of these two components. If $o^p$ is larger, then the driver waits at the restaurant, which is loss of productive man hours. On the other hand, if $o^p$ is significantly smaller, then the food may get stale. 
\begin{example}
\label{ex:edt}
\textit{For simplicity, we assume $time(A(o))$ for all orders in Fig.~\ref{fig:ex1}. Continuing from Ex.~\ref{ex:firstmile}, $EDT(o_1,v_1)=\max\{8,5\}+13=21$. On the other hand, if $A(o_2)=v_2$ the quickest route plan is $\{u_4,u_6,u_9\})$, and thus $EDT(o_2,v_2)=\max\{4,5\}+7=12$.}
\end{example}
\begin{problem}[The food delivery problem (FDP)] 
\label{prb:online} 
\textit{Given a set of orders $O$ and vehicles $\CV$, if $\mathcal{A}$ is the set of all possible assignments of $O$ to $\CV$, find the assignment $A$ that minimizes the average expected delivery time.}
\begin{align}
\label{eq:objective}
    \arg\min_{A\in \mathcal{A}} \left\{\frac{1}{\lvert O \rvert}\sum_{\forall o\in O} EDT(o,A(o))\right\}
\end{align}
\end{problem}
At this juncture, we highlight two practical constraints.
\noindent
\textbf{(1)} In the real world, we work with a \textit{data stream} of orders and vehicles instead of sets. The typical strategy to circumvent this issue is to accumulate orders over a time window $\Delta$ and assign this order set to available vehicles~\cite{foodmatch,mdrp}. This process is then repeated over the data stream. 

\noindent
\textbf{(2)} A food delivery service provider guarantees a \textit{Service-level agreement (SLA}) of delivering the order within a stipulated time. This SLA is needed since food goes stale within a short time duration. Thus, it is desirable that the expected delivery time of \textit{all} orders is within the SLA threshold $\Omega$.
\subsection{Problem Formulation: Fair Food Delivery}
\label{sec:problem}
\noindent
Problem~\ref{prb:online} optimizes only the customer experience and does not incorporate the driver experience. As discussed earlier, a system would be fair if it equally distributes the \emph{time-normalized income} among all delivery agents. We therefore formalize the notion of fairness for delivery vehicle. 

\begin{definition}[Time-Normalized Vehicle Income]
Given any two time points $t_1 < t_2$ and a vehicle $v$, let $\availTime(v,t_1,t_2)$ be the time that $v$ was available in time interval $[t_1,t_2]$, and let $\driveTime(v,t_1,t_2)$ and $\waitTime(v,t_1,t_2)$ be the total time spent driving and waiting at restaurant respectively by vehicle $v$ in $[t_1,t_2]$. Then, if $\availTime(v,t_1,t_2) > 0$ we say that $v$'s {\em time-normalized income} in $[t_1,t_2]$ is defined as:
\begin{equation}
\label{eq:income}
\inc(v,t_1,t_2)=\frac{w_1\cdot \driveTime(v,t_1,t_2) + w_2\cdot \waitTime(v,t_1,t_2)}{\availTime(v,t_1,t_2)},
\end{equation}
where $w_1, w_2$ are payment parameters  decided by the food delivery company.
\label{def:income}
\end{definition}
Typically, $w_1>w_2$. Note that $\availTime(v,t_1,t_2) - \left(\driveTime(v,t_1,t_2) + \waitTime(v,t_1,t_2)\right)$ is the time during this interval when $v$ was either available, but had no orders assigned to it, i.e., it was idle.
\begin{example}
\label{ex:inc}
\textit{For Vehicle $v_2$ in Fig.~\ref{fig:ex1} $(\driveTime=13,\waitTime=2,\availTime=25)$ at $t=100$ so $\inc(v_2,0,100)= (w_1\cdot 13+w_2\cdot 2)/25 = 0.584$ where $w_1=1$ and $w_2=0.8$.}
\end{example}
\begin{problem}[\updatetext {Fair Income Distribution in Food Delivery}] 
\label{prb:fair} 
\textit{Given a set of orders $O$ beginning at time $0$ and ending at time $T_{m}$ and a set of available vehicles $\CV$, if $\mathcal{A}$ is the set of all possible assignments of $O$ to $\CV$, give an algorithm to find an assignment $A$ to minimize the income gap.}
\begin{align}
\label{eq:fair}
    \arg\min_{A\in \mathcal{A}} \left\{\max_{v\in \CV} \left\{\inc\left(v,0,T_{m}\right)\right\} - \min_{v\in \CV} \left\{\inc\left(v,0,T_{m}\right)\right\}\right\}
\end{align}
\end{problem}
\updatetext {Optimizing Problem~\ref{prb:fair} may lead to an increase in delivery time, and hence hamper customer's experience}. \updatetext {We therefore aim to minimize Problem~\ref{prb:fair} under \textit{bounded} increase in delivery time. Formally,}
\begin{problem}[\updatetext {The Fair Food Delivery Problem}]
\label{prob:multifair}
\updatetext{Find an assignment $A$ minimizing Problem~\ref{prb:fair} under the constraint: 
\begin{align} 
Dist(A(o),o) \leq \Gamma \times nearDist_{o} ,\forall o \in O
\end{align}
where $Dist(o,A(o)$ is the road network distance (shortest path) of order $o$ from its assigned vehicle $v=A(o)$, $nearDist_{o}$ is the distance to the nearest vehicle from order $o$ and $\Gamma>1$ is a threshold.}
\end{problem}

\subsection{Theoretical Characterization}
\begin{theorem}
\label{thm:fair}
There is no PTIME algorithm that can approximate Prob.~\ref{prb:fair} within any constant factor $c$, where $0 < c \leq 1$, unless $P = NP$.
\end{theorem}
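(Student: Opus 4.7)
The plan is to prove the inapproximability via a gap-preserving reduction from the \textsc{Partition} problem to Prob.~\ref{prb:fair}, tailored so that the optimum income gap is exactly $0$ on YES-instances and bounded away from $0$ on NO-instances. Granted such a gap, any polynomial-time algorithm with a multiplicative approximation guarantee would in particular return $0$ when $\mathrm{OPT}=0$ and a strictly positive value when $\mathrm{OPT}>0$; it would therefore decide \textsc{Partition} in polynomial time, forcing $P=NP$.

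For the reduction, given a \textsc{Partition} instance $\{a_1,\ldots,a_n\}$ of positive integers with $\sum_i a_i = 2S$, I would build a road network with a hub node $h$ and, for each $i$, a dropoff node $d_i$, using an edge $(h,d_i)$ of weight $a_i$ and a zero-weight edge $(d_i,h)$. I would then instantiate two identical vehicles $v_1, v_2$ at $h$ with availability window $[0,T_m]$ for $T_m := 2S+1$, set $w_1=1$, $w_2=0$, capacity $\maxorders=1$, and create $n$ orders $o_i=\langle h, d_i, o_i^t, 0\rangle$ whose arrival times are staggered by more than $2\max_i a_i$, so that every assignment is temporally feasible irrespective of which vehicle is chosen.

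Under any assignment $A$, each vehicle enjoys $\availTime(v_j,0,T_m)=T_m>0$ and $\waitTime(v_j,0,T_m)=0$, while $\driveTime(v_j,0,T_m)=\sum_{i\,:\,A(o_i)=v_j}a_i$. Hence objective (\ref{eq:fair}) evaluates to
\begin{equation*}
\tfrac{1}{T_m}\Bigl|\textstyle\sum_{i\,:\,A(o_i)=v_1}a_i \;-\; \sum_{i\,:\,A(o_i)=v_2}a_i\Bigr|,
\end{equation*}
which equals $0$ iff the multiset $\{a_i\}$ admits a balanced bipartition (i.e., \textsc{Partition} is a YES-instance), and is at least $1/T_m$ otherwise by integrality of the $a_i$. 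Thus a $c$-approximation for any finite $c$ would distinguish these two cases.

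The main obstacle is not the reduction itself but the careful verification that the abstract construction faithfully instantiates Defs.~\ref{def:road-network}--\ref{def:income} — in particular that first-mile and last-mile distances, quickest route plans, and the availability interval all evaluate as intended, and that no implicit feasibility constraint on the streaming order arrivals is violated. Once this bookkeeping is discharged, the construction has size polynomial in $n$ and $\log\max_i a_i$, the reduction is valid, and the inapproximability conclusion of Thm.~\ref{thm:fair} follows.
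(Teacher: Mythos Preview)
Your proposal is correct and follows essentially the same route as the paper: both reduce from \textsc{Partition} (the paper phrases it as \textsc{Subset Sum} with target $B=\sum_i a_i/2$), place two vehicles on a star-shaped network, and argue that the optimal income gap is $0$ on YES-instances and strictly positive on NO-instances, so that any finite-factor multiplicative approximation would decide the NP-hard source problem in polynomial time. Your instantiation is in fact more explicit than the paper's (you fix $w_1,w_2$, zero-weight return edges, and $\maxorders=1$ with temporally staggered arrivals, whereas the paper simply sets $\maxorders=B$ and asserts income $a_i$ per request); just reconcile your choice $T_m=2S+1$ with the staggering spacing, since with $n$ orders spaced by more than $2\max_i a_i$ the last arrival time can exceed $2S$.
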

\begin{proof} We prove it using a reduction from the \emph{subset sum} problem, which is a known NP-hard problem. Here, we show that if there exists a $c$-approximation algorithm for Prob.~\ref{prb:fair}, then we can separate the YES instance from the NO instances of subset sum in PTIME.

\begin{definition}[Subset Sum Problem]
Given a multi-set of items $I=\{a_1,\cdots,a_m\}$ and a target sum $B$, the subset sum is a decision problem that seeks to answer whether there exists a subset $I_1\subseteq I$, such that $\sum_{\forall a\in I_1} a=B$. We denote an instance of a subset sum problem with the notation $\mathcal{S}=\langle I,B\rangle$. $\mathcal{S}$ is a YES instance if the constraint is satisfied.
\end{definition}

Let us consider the instance $\mathcal{S}=\langle I,B\rangle$ where $\forall a_i\in I,\:a_i\geq 1$ and $B=\sum_{\forall a_i\in I}\frac{a_i}{2}$. Clearly the problem is still NP-hard. Given $\mathcal{S}$, we first construct a road network as follows. We first create a source node $S$. For each $a_i \in I$, we create a node $n_i$. From $S$ we have a directed edge to each $n_i$ and vice versa. We assume that there is an order corresponding to each $n_i$, such that the restaurant is located at $S$ with $0$ waiting time for all orders, and customer's location is $n_i$. The income from serving this request is assumed to be $a_i$. Thus, the maximum income that can be earned is $2\times B$.

We assume there are two vehicles and the maximum carrying capacity is $\maxorders=B$. Let $m^*$ be the income gap of the optimal solution of Prob.~\ref{prb:fair} on the constructed instance. We first show the following claim holds.
\begin{lemma}
If $\mathcal{S}$ is an YES instance of the subset sum problem, then $m^* = 0$.
\end{lemma}
\begin{proof}
If $\mathcal{S}$ is a YES instance, then there exists a partition of the set $I$ into $I_1$ and $I_2$ such that $\sum_{i \in I_1} a_i  = \sum_{i \in I_2} a_i = B$. Thus,  if we recommend the paths corresponding to elements of $I_1$ and $I_2$ to each of the two vehicles, then both earn an income of $B$, and hence, $m^*=0$. 
\end{proof}
\begin{lemma}
If $\mathcal{S}$ is a NO instance of the subset sum problem, then $m^*>0$.
\end{lemma}
\begin{proof}
If $\mathcal{S}$ is a NO instance, then clearly one of the two vehicles will earn less than $B$ and hence $m^*>0$. 
\end{proof}
This proves that Prob.~\ref{prb:fair} is NP-hard. Now, let us assume there exists a PTIME $c$-approximation., i.e., $m^* \leq m \leq cm^*$, where $m$ is the output of the approximation algorithm and $c\geq 1$. We know $m^*>0$ for NO instances. Since $m*\leq m$, $m> 0$. On the other hand, for YES instances, $m\leq cm^*\leq 0$ as $m^*=0$. Thus, the boundaries of $m$ for YES and NO instances will not overlap, and hence we will be able to separate those instances. This cannot be true unless $P=NP$. 
\end{proof}

\begin{corollary}
\updatetext{There exists no PTIME algorithm that can approximate Prob.~\ref{prob:multifair} within any constant factor $c$, where $0 < c \leq 1$, unless $P = NP$.}
\end{corollary}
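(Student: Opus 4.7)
The plan is to lift Theorem~\ref{thm:fair} to Problem~\ref{prob:multifair} by showing that the subset-sum reduction used in the proof of Theorem~\ref{thm:fair} already produces instances that trivially satisfy Problem~\ref{prob:multifair}'s additional distance constraint. This is not automatic: the constraint $Dist(A(o),o)\le \Gamma\cdot nearDist_{o}$ could in principle shrink the feasible set and alter the optimal value, so I need to verify that the reduction instances are feasible for Problem~\ref{prob:multifair} and retain the same optimal income gap as under Problem~\ref{prb:fair}.

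The key idea is a careful choice of initial vehicle positions. First I would reuse the construction of Theorem~\ref{thm:fair} verbatim: a source node $S$, one node $n_i$ per item $a_i$ with directed edges between $S$ and each $n_i$, two vehicles of capacity $\maxorders=B$, and one order per $n_i$ whose restaurant is $S$. Next, I would place both vehicles at $S$ and have all $m$ orders arrive in a single batch at time $0$. At the moment of assignment, every order $o$ has $o^r = S$ and both vehicles are at $S$, so $nearDist_{o} = 0$ and $Dist(A(o),o) = 0$ for either choice of assigned vehicle. Hence the constraint $0 \le \Gamma\cdot 0$ holds trivially for any $\Gamma > 1$, and every assignment that is feasible for Problem~\ref{prb:fair} remains feasible for Problem~\ref{prob:multifair}.

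Since the feasible sets and objective values of Problem~\ref{prob:multifair} and Problem~\ref{prb:fair} coincide on this family of instances, any PTIME $c$-approximation for Problem~\ref{prob:multifair} would immediately yield a $c$-approximation for Problem~\ref{prb:fair} on these inputs. Invoking the YES/NO separation inside Theorem~\ref{thm:fair} (where $m^{*}=0$ for YES instances and $m^{*}>0$ for NO instances) would then decide subset sum in polynomial time, forcing $P=NP$. The main obstacle I anticipate is pinning down precisely when $nearDist_{o}$ is measured: if it were evaluated dynamically as vehicles moved to service earlier orders, the constraint could become non-trivial for later orders. This is resolved by appealing to the batch-assignment semantics inherited from Problem~\ref{prb:online}, under which all orders in a time window $\Delta$ are assigned simultaneously. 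In the reduction a single window containing all $m$ orders suffices, so both vehicles really are co-located at $S$ at the decision point and the argument goes through.
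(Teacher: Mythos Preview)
Your proof is correct, and it takes a genuinely different (and in some ways tidier) route than the paper's. The paper's argument is a one-liner: it observes that any instance of Problem~\ref{prb:fair} becomes an instance of Problem~\ref{prob:multifair} once $\Gamma$ is chosen larger than $diameter \times min\_distance$ (intuitively, large enough that the distance constraint is vacuous for every possible assignment on that road network). In other words, the paper treats $\Gamma$ as part of the reduction and inflates it to kill the constraint. You instead keep $\Gamma>1$ arbitrary and kill the constraint by co-locating both vehicles at $S$ so that $nearDist_{o}=Dist(A(o),o)=0$ for every order at assignment time. Your approach has the advantage that it yields inapproximability for \emph{every} fixed $\Gamma>1$, not just sufficiently large $\Gamma$; the paper's approach has the advantage that it applies to arbitrary hard instances of Problem~\ref{prb:fair}, not only those arising from the subset-sum construction. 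Your care about when $nearDist_{o}$ is measured is well placed, and your resolution via a single-window batch assignment is consistent with the paper's streaming semantics, so no gap remains.
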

\updatetext{\textsc{Proof.} Any instance of Prob.~\ref{prb:fair} reduces to an instance of Prob.~\ref{prob:multifair} for $\Gamma> diameter \times min\_distance$, where $diameter$ indicates the longest shortest path in the road network and $min\_distance$ is the minimum distance between any two points in the network.}

\subsection{Our Proposal: \fair}
\label{sec:ffm}
\noindent
We propose a heuristic algorithm \fair to solve Problem~\ref{prob:multifair}. It builds a \textit{weighted bipartite} graph with 
vehicles in one partition and \textit{clusters} of orders in the other. 
The weights of the edges are computed 
such that finding a \textit{ minimum weight matching} in this bipartite graph optimizes the criterion of  Prob.~\ref{prb:fair}, while also ensuring a good solution with respect to Prob.~\ref{prb:online}. We find the minimum weight matching by running the \textit{Kuhn-Munkres} algorithm~\cite{kuhn1955hungarian,munkres1957algorithms} on the graph. We now elaborate on the key steps.

\begin{definition}[Shortest Delivery Time~\cite{foodmatch}]
\label{def:sdt}
\textit{The {\em shortest delivery time} for an order $o$ is $SDT(o)= o^p + SP(o^r, o^c, o^t)$.} 
\end{definition}
$SDT(o)$ is a natural lower bound on EDT (Eq.~\ref{eq:edt}). 

\begin{example}
\label{ex:sdt}
\textit{For order $o_1$ in Fig.~\ref{fig:ex1}, food preparation time is $5$ and travel time from $restaurant_1$ at $u_3$ to $customer_1$ at $u_7$ is $(5+8)=13$, so  $SDT(o_1)=5+13=18$.}
\end{example}


\begin{definition}[Augmented Order Delivery Time]
\label{def:aodt}
Consider a vehicle $v$ at time $t$. Suppose at this time we add a cluster of orders $O$ to $v$'s route plan and $v$ is able to deliver all orders in this augmented route plan by time $t' > t$, on the assumption that no new orders are added to $v$'s route plan in time interval $(t,t')$. Then we define the  Augmented Order Delivery Time $AODT(O,v,t)$ as $t' - t$.
\end{definition}
\begin{example}
\label{ex:aodt}
\textit{If order $o_3$ in Fig.~\ref{fig:ex1} assigned to vehicle $v_1$ (at location $u_1$) at time  $t=100$ and $v_1$ was already assigned $o_1$, then $v_1$'s route plan will now be $RP = \{u_1,u_2,u_3,u_7,u_8\}$. $v_1$ will take $(8+5+8+5)=26$ time to deliver $o_3$ and it will reach $u_8$ at time $t'=100+26=126$ so $AODT(o_3,v_1,100) = t' - t = 126-100 =26$.}
\end{example}
\begin{definition}[Augmented Order Payment] 
\label{def:aop}
Consider a vehicle $v$ at time $t$ which may be currently idle or currently assigned some order. Suppose at this time we add a cluster of orders $O$ to $v$'s route plan (which may be empty if the vehicle is idle) and $v$ is able to deliver all orders in this augmented route plan at time $t'  > t$ such that the total driving time in the interval $[t,t']$ is $t_1$ and the total waiting time is $t_2$. Then, on the assumption that no new orders are added to $v$'s route plan in time interval $[t,t']$,  we define the {\em Augmented Order Payment} $AOP(O,v,t)$ as $w_1\cdot t_1 + w_2 \cdot t_2$. 
\end{definition}
\begin{example}
\label{ex:aop}
\textit{Suppose at $t=100$ order $o_1$ in Fig.~\ref{fig:ex1} was already assigned to $v_2$ located at $u_4$. At this point, $v_2$'s route plan is $\{u_4,u_2,u_3,u_7\}$. Suppose at this time, $o_3$ is assigned to $v_2$. Now, $v_2$ will follow route plan $RP = \{u_4,u_2,u_3,u_7,u_8\}$. $v_2$ will travel 3 time units to reach $restaurant_1$ located at $u_2$, then it will wait $2$ time units till the order is ready. At $t = 105$, it will leave $u_2$ and travel 5 time units to $u_3$ where it will wait $1$ time unit for $o_3$ to be prepared. Finally, $v_2$ will set out from $u_3$ at time $t = 111$ and, after delivering $o_1$ and $u_7$ at time $t = 119$, finish its augmented route plan by delivering $o_3$ at $u_8$ at time $t = 124$. The total time spent is therefore $124 - 100 = 24$, of which $2+1 = 3$ time units were spent waiting. So $AOP(o_3,v_2,100) = 21 \cdot w_1 + 3 \cdot w_2$ = 23.4 where $w_1=1$ and $w_2=0.8$. Recall, $AODT(o_3,v_2,100) = 24$ (Ex.~\ref{ex:aodt}}).
\end{example}

As discussed in our problem formulation (Prob 1), we partition the data stream of orders into windows of length $\Delta$, and  allocate all orders that arrived within this window. This process is then repeated for each of the subsequent windows. 
\begin{definition}[Next-slot Normalized Income]
\label{def:ns_inc}
At time $\ell \Delta$, i.e. at the beginning of the $\ell +1$st window, consider a vehicle $v$ that is available. In addition, consider an unassigned order $o$. We define the next-slot normalized income of $v$ if it is assigned $o$ as 
\begin{equation}
\label{eq:inc'}
\ninc(v,o,\ell) = \frac{\inc(v,\ell \Delta)\cdot \availTime(0,\ell \Delta) + AOP(o,v,\ell \Delta)}{\ell \Delta + AODT(o,v,\ell \Delta)}.
\end{equation}
\end{definition}
\begin{example}
\label{ex:ns_inc}
\textit{Continuing from Ex.~\ref{ex:aop} for Vehicle $v_2$ in Fig.~\ref{fig:ex1} $(dT=13,wT=2,aT=25)$ at $t=100$, $AODT(o_3,v_2,100) = 24$,  $AOP(o_3,v_2,100) = 23.4$ and $\inc(v_2,0,100)= 0.584$ (see Ex.~\ref{ex:inc}). So
$\ninc(v_2,o_3,100) = \frac{\inc(v_2,0,100)\cdot 25 + AOP(o_3,v_2,100)}{25 + AODT(o,v,100)} = (0.584\cdot 25+ 23.4)/(25+24)=0.76$.
}
\end{example}
\subsubsection{Creating a weighted bipartite graph:}
\label{sec:bipartite}
  At time $\ell \Delta$, let us assume that $\CV_\ell$ is the set of available vehicles and $O_\ell$ is the set of unallocated orders. We create a weight bipartite graph $(U_1, U_2,E)$ where $U_1 = \CV_\ell$, i.e., one side of the partition is the set of available vehicles and $U_2$ is a \textit{cluster} of orders.

\noindent
\textbf{Clustering orders:} If $|O_\ell| \leq f\cdot |\CV_\ell|$ where $f$ is a parameter chosen in $(0,1)$, we set $U_2 = O_\ell$. Otherwise we perform a clustering on $O_\ell$ using Ward's method, i.e., we successively coalesce those two clusters whose being delivered by a single vehicle leads to the least increase in the extra delivery time. The cluster size is not allowed to cross \maxorders. We stop either when the number of clusters falls to $f \cdot |\CV_\ell|$ or when the increase in expected delivery time due to clustering crosses a threshold $\eta$. More simply, we stop when either further clustering significantly compromises customer experience or there are too few clusters to fairly allocate to all available vehicles. 
We denote the final clusters as $O'_\ell$.

\noindent
\textbf{Edge weights in bipartite graph:}
Each edge of the bipartite graph is of the form $(v,O)$ where $O \subset O_\ell$ is either a single order (a singleton set) or a cluster of orders. 
\updatetext{If vehicle $v$ is at a distance not exceeding $\Gamma\times nearDist_{O}$ (Recall Prob.~\ref{prob:multifair})}, we set the weight as follows:
\begin{equation}
\label{eq:weight_bipart}
w(v,O) = \textit{\ninc}(v,O,\ell) - \min_{v\in \CV} \textit{\inc}(v,\ell \Delta).
\end{equation}
To identify all vehicles whose distance not exceeding $\Gamma\times nearDist_{O}$, in an efficient manner, we perform \textit{best-first search} on the road network graph (See Alg.~\ref{algo:fg}). Specifically, we start from the restaurant locations of each order in $O$ and visit all the nearby vehicles in the best first search order till the distance from source restaurant exceeds $\Gamma\times nearDist_{O}$. All vehicles beyond this boundary are assigned edge weight $\approx \infty$.

Finally, we run \textit{Kuhn-Munkres} on the bipartite graph to obtain the allocation for the $\ell+1$st window.

\subsection{Algorithm}

\begin{algorithm}[h!]
    \caption{Bipartite graph construction }\label{algo:fg}
    {\scriptsize
    \begin{flushleft}
        \textbf{Input:} Available vehicles $\CV_\ell$, order batch $O'_\ell$, parameter $\Gamma$, current time $t$, road network $G(V,E,\beta)$\\
        \textbf{Output:} Bipartite Graph
    \end{flushleft}

    \begin{algorithmic}[1]
    \State Initialize bipartite graph $B$ with node sets $\CV_\ell$ and $O'_\ell$, and empty edge set $E_b$.
    \Foreach {$o'\in O'_\ell$}
	\State $source \leftarrow loc(o',t)$
	\State $PQ\leftarrow$ Empty Priority Queue
	\State $PQ.insert(\langle source,0 \rangle)$
	\State Initialize $\forall u\in V,\:visited(u)\leftarrow false$
	\State $\CV_{o'}\leftarrow \emptyset$
	\State Set $found\_nearest\_v=false$
	\State Set $nearDist_{o'}=0$
    \While {$PQ.empty()=false$}
	\State $\langle u, \delta \rangle\leftarrow PQ.pop()$
	    \If {($found\_nearest\_v=true \And \delta>=\Gamma\times nearDist_{o'}$)} 
	    \State{Break}
	    \EndIf
        \IIf {$visited(u)=true$} \textbf{continue}
	\State $visited(u)\leftarrow true$
	\State $I(u)\leftarrow\{v\in\CV_\ell\:\mid\:loc(v,t)=u \}$
	    \If {($I(u)\neq NULL \And found\_nearest\_v=false$)}
	    \State Set $nearDist_{o'}=\delta$
	    \State Set $found\_nearest\_v=true$
	    \EndIf
	 
	\Foreach {$v\in I(u)$} 
		\State Add edge from $v$ to $o'$ in $E_b$ with edge weight given by Eq.~\ref{eq:weight_bipart}.
	\EndForeach
	\State $\CV_{o'}\leftarrow \CV_{o'}\cup I(u)$
	\State $N(u)\leftarrow (u'\:\mid\: (u,u')\in E,\:visited(u)=False,\text)$
	\State $\forall u'\in N(u),\:PQ.insert(\langle u',\delta+\beta(e=(u,u'),t)\rangle)$
    \EndWhile
	\State $\forall v\in\CV_\ell\setminus \CV_{o'}$, add edge from $v$ to $o'$ in $E_b$ with $\Omega$ edge weight.
    \EndForeach
    \State \Return bipartite graph $B(\CV_{\ell},O'_\ell,E_b)$
    \end{algorithmic}}
\end{algorithm}

Algorithm ~\ref{algo:fg} represents the pseudocode for bipartite graph construction using best first search approach. For a given batch of orders $o'\in O'_\ell$ we extract its restaurant location of first unpicked order $source \leftarrow loc(o',t)$ (line 3) and from source we initiate best first search (lines 10-22). A priority queue $PQ$ is used to store candidate nodes to be visited and is initialized with source (line 4-5).
$PQ$ stores a tuple  $\langle u, \delta_u \rangle$ where $\delta_u$ is $SP(source, u, t)$, the shortest path from $source$ to $u$, and retrieves nodes in ascending of $\delta_u$. Once top node from $PQ$ is popped (line 11), we proceed further if it's distance does not exceed $\Gamma\times nearDist_{o'}$(line 12-13) and the node is not yet visited (line 14). Next, we mark $u$ as visited (line 15). We take all vehicles that are at $u$ (line 16). We set nearest vehicle distance $nearDist_{o'}$ to $\delta$ if nearest vehicle is not yet found (lines 17-19). We add an edge from $o'$ to all vehicles available at $u$ in edge set $E_b$ with edge weight given by Eq.~\ref{eq:weight_bipart} (lines 20-22). After this, we insert all neighbors $u'$ of $u$ to $PQ$ with $ \delta' = \delta+\beta(e=(u,u'),t)$ where $\beta(e=(u,u'),t)$ is edge weight of $e=(u,u')$, i.e, average time taken in road network at time $t$ (lines 24-25). Once $PQ$ is empty, we add an edge from $o'$ to all remaining vehicles with weight $\Omega$ ($\approx \infty$)(line 26). This whole process is repeated for all batches of orders $o'\in O'_\ell$ (line 2).

 \noindent
 \begin{theorem}
 The time complexity of the proposed algorithm is  $\mathcal{O}(m\cdot n(q+ \max(m,n)))$, where $m=|\CV_\ell|$, $n=|O'_\ell|$, $\mathcal{O}(q)$ is the time taken for shortest path computation.
 \end{theorem}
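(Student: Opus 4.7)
My plan is to bound the running time by analysing the two phases of the algorithm separately---the construction of the weighted bipartite graph via Algorithm~\ref{algo:fg}, and the subsequent Kuhn--Munkres matching on that graph---and then combining the bounds. First, I would analyse the bipartite graph construction. The outer loop iterates over the $n=|O'_\ell|$ order clusters. Each iteration launches a best-first search from the restaurant location, terminating once the frontier distance exceeds $\Gamma \times nearDist_{o'}$. In the worst case the search explores every reachable node, so its cost is dominated by a single-source shortest-path computation, i.e., $\mathcal{O}(q)$. Within this traversal, at most $m=|\CV_\ell|$ edges are inserted for vehicles found within range, and line~26 inserts at most $m$ further edges with weight $\Omega$ for the remaining vehicles. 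Consequently, each outer iteration costs $\mathcal{O}(q + m)$, yielding $\mathcal{O}(n(q+m))$ for the entire construction.

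Next, I would bound the Kuhn--Munkres cost on the resulting bipartite graph of size $m \times n$. The standard complexity of the Hungarian algorithm on a rectangular bipartite graph can be written as $\mathcal{O}(\min(m,n)^2 \max(m,n))$, via $\min(m,n)$ augmenting-path iterations each costing $\mathcal{O}(\max(m,n)^2)$; this is in turn bounded by $\mathcal{O}(mn\max(m,n))$ since $\min(m,n)^2 \le mn$. Summing the two phases gives the total running time $\mathcal{O}(n(q+m) + mn\max(m,n))$. Using $nq \le mnq$ and $nm \le mn\max(m,n)$ (valid whenever $m\ge 1$), this simplifies to $\mathcal{O}(mn(q+\max(m,n)))$, matching the claim.

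The main obstacle is less about novelty and more about careful bookkeeping of the data-structure operations: I would need to justify that priority-queue operations in the best-first search stay absorbed within the black-box shortest-path cost $q$, that the vehicle lookups $I(u)$ at popped nodes amortise to $\mathcal{O}(m)$ over a single outer iteration, and that the chosen Kuhn--Munkres variant really achieves the rectangular $\mathcal{O}(\min(m,n)^2\max(m,n))$ bound rather than the looser $\mathcal{O}(\max(m,n)^3)$, which would not sit cleanly inside $mn\max(m,n)$ when the bipartition is highly unbalanced. Once those points are nailed down, the theorem follows by direct cost accounting.
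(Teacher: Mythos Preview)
Your two-phase decomposition matches the paper's, and your Kuhn--Munkres bound is fine (the paper states it as $\mathcal{O}(k_{\top}^2 k_{\bot})$ with $k_{\top}=\max(m,n)$, $k_{\bot}=\min(m,n)$, which equals $\mathcal{O}(mn\max(m,n))$). The gap is in your accounting for the bipartite-graph construction. You treat each edge insertion at line~21 as an $\mathcal{O}(1)$ operation, but the weight in Eq.~\ref{eq:weight_bipart} is $\ninc(v,O,\ell)-\min_{v}\inc(v,\ell\Delta)$, and evaluating $\ninc$ requires the augmented order payment and augmented order delivery time (Defs.~\ref{def:aodt}--\ref{def:ns_inc}). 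Those quantities depend on the \emph{quickest} route plan through $v$'s current orders together with the new cluster $O$, which means enumerating the admissible permutations of up to $2\,\maxorders$ pick-up/drop-off nodes and issuing a shortest-path query between each consecutive pair. The paper therefore charges $\mathcal{O}(G(\maxorders)\cdot\maxorders\cdot q)$ per edge, with $G(\maxorders)=(2\maxorders)!/((\maxorders+1)\,\maxorders!)$, for a total of $\mathcal{O}(G(\maxorders)\cdot\maxorders\cdot q\cdot m n)$ over all $mn$ pairs; only after this is $\maxorders$ treated as a small constant and the bound collapsed to $\mathcal{O}(mnq)$.

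Your $\mathcal{O}(n(q+m))$ for construction attributes the $q$ factor solely to the best-first traversal and omits the per-edge route-plan work entirely, so it is an undercount. You still land on the correct final expression, but only because the slack you introduce via $nq\le mnq$ happens to swallow the missing factor of $m$; the argument as written never actually establishes that graph construction is $\mathcal{O}(mnq)$. To make the proof sound you need to charge each edge-weight evaluation with its route-plan and shortest-path cost, as the paper does.
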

 \begin{proof} 
To create an optimal route plan, we need all permutations of at most $2 \maxorders$ locations with the constraint that the customer location for an order must come after the restaurant for that order. This is the same as the ways of arranging $\maxorders$ pairs of parentheses with the twist that in our cases the parentheses are distinguishable. So this number is $G(\maxorders) = (2\maxorders)!/((\maxorders + 1)(\maxorders)!)$. For each permutation we need to find the shortest path between subsequent nodes in the order. Hence the time taken to find the route plan is $\mathcal{O}(G(\maxorders) \cdot \maxorders \cdot q)$ where $\mathcal{O}(q)$ is the time taken for shortest path computation.  We compute $w(v,o')$ for each vehicle-order pair which takes a total of  $\mathcal{O}\left((G(\maxorders) \cdot \maxorders \cdot q\cdot m\cdot n\right)$ time where $m=|\CV_\ell|$ and $n=|O'_\ell|$.
 Let $k_{\top}=\max(n,m)$ and $k_{\bot}=\min(n,m)$. Kuhn-Munkres algorithm take  $\mathcal{O}(k_{\top}^2 k_{\bot})$ time to compute a maximum weighted matching. Hence an overall complexity is $\mathcal{O}((G(\maxorders) \cdot \maxorders \cdot q\cdot m\cdot n+ k_{\top}^2 k_{\bot})=\mathcal{O}(G(\maxorders) \cdot \maxorders \cdot q\cdot m\cdot n+ m\cdot n\cdot \max(m,n))$.
 
 Typically, $\maxorders$ is $2$ or $3$, and hence $\maxorders \ll m,n,q$. Thus, we drop this term, which reduces the complexity to $\mathcal{O}(m\cdot n(q+ \max(m,n)))$.
 
 \end{proof}

\section{Experimental Evaluation}
\label{sec:experiments}
\noindent
In this section, we benchmark \fair and establish:
\begin{itemize}
    \item \textbf{Fairness:} \fair imparts more than $10X$ improvement in  fairness over \fmplus and baseline approaches adopted from the cab service industry.
    \item \textbf{Cost of fairness:} \fair maintains a comparable delivery time as that of \fmplus~\cite{foodmatch}.
    \item \textbf{Scalability:} \fair is scalable to real-world workloads in large metropolitan cities.
\end{itemize}

\subsection{Evaluation Framework} 
\label{sec:setup}
\noindent
All implementations are in C\texttt{++}. Our experiments are performed on a machine with Intel(R) Xeon(R) CPU @ 2.10GHz with 252GB RAM on Ubuntu 18.04.3 LTS. Our codebase is available at 
{\color{black}
\url{https://github.com/idea-iitd/fairfoody.git}.} 

\noindent
\textbf{Baselines:}\\
$\bullet$ \textbf{\fmplus:} \fmplus squarely focuses on minimizing the delivery time. Hence, it is agnostic to driver incomes. A comparison with \fmplus reveals \textbf{(1)} the unfairness in the system when we optimize only delivery time, and \textbf{(2)} the impact on customer experience in terms of delivery time if fairness is included as an additional objective in the optimization function.\\
$\bullet $\textbf{\abhi:} \abhi is designed to ensure two-sided fairness in the cab-hailing industry. The two sides correspond to cab drivers and customers. Towards that end, \abhi optimizes a weighted combination of driver's income and the wait-time faced by customers (delivery time in the context of food delivery). We use $\lambda$ to denote the weightage given to drivers income; $1-\lambda$ corresponds to weightage of waiting time for customers. Although there are similarities between the cab and food delivery industry, there are several subtle differences that necessitates the need for a specialized algorithm for food delivery. For example, sending the nearest cab driver minimizes the waiting time of a customer, whereas that is not the case in food delivery due to the intermediate operation of picking up food from the restaurant. This comparison allows us to precisely understand the impact of a food-delivery specific fairness algorithm.\\
We use \textit{hierarchical hub labeling}~\cite{hhl} to index shortest paths queries in all benchmarked algorithms. 
 
\noindent
\textbf{Metrics:} 
The performance is quantified through:

\textbf{$\bullet$ Gini coefficient:} 
The Gini coefficient is the ratio of the area that lies between the line of equality and the Lorenz curve over the total area under the line of equality \cite{gini_coef}.
 Mathematically,
 \begin{equation}
 \label{eq:gini}
Gini=\frac{\sum_{i=1}^{n} \sum_{j=1}^{n}\left|x_{i}-x_{j}\right|}{2 n \sum_{j=1}^{n} x_{j}}
\end{equation}
where $x_i$ is the income per hour of driver $i$ and $n$ is the number of drivers. A lower Gini indicates fairer distribution.

 
 \textbf{$\bullet$ DTPO:} DTPO measures the average delivery time per order. DTPO allows us to quantify the cost of fairness.

\textbf{$\bullet$ Percentage of SLA violations (SLA-V):}  We measure the percentage of orders not delivered within the promised time limit (Prob.~\ref{prb:online}). The time limit is set to $45$ minutes.

\begin{table}[b]
    \centering
    \scalebox{0.9}{
    \begin{tabular}{|l|l|l|l|l|l|}
    \hline
       \textbf{City} &  \textbf{Algorithm} & \textbf{Gini} & \textbf{Income} & \textbf{DTPO} & \textbf{SLA-V} \\
       \textbf{} &  \textbf{} & \textbf{} & \textbf{Gap} & \textbf{} & \textbf{(\%)} \\ \hline
        \multirow{4}{*}{\textbf{A}}&\fair & \textbf{0.035} & \textbf{20.5} & 15.4 & 0.33 \\ 
        &\abhi, $\lambda=1$ & 0.32 & 58 & 17.3 & 0.37 \\ 
        &\abhi, $\lambda=0$  & 0.526 & 55.9 & \textbf{15.2} & 0.33 \\ 
        &\fmplus & 0.518 & 59.2 & \textbf{15.2} & \textbf{0.32} \\ \hline
       \multirow{4}{*}{\textbf{B}} &\fair  & \textbf{0.047} & \textbf{30.4} & 15.5 & \textbf{0.22} \\ 
        &\abhi $\lambda=1$  & 0.316 & 59.9 & 15.5 & 23.12 \\ 
        &\abhi, $\lambda=0$ & 0.471 & 59.4 & \textbf{14.5} & 19.62 \\ 
        &\fmplus & 0.512 & 59.3 & 15.4& 0.23 \\ \hline
       \multirow{4}{*}{\textbf{C}} &\fair & \textbf{0.035} & \textbf{33.5} & 16.2 & 0.33 \\ 
        &\abhi, $\lambda=1$ & 0.323 & 58.3 & 16.9 & 12.05 \\ 
        &\abhi, $\lambda=0$  & 0.513 & 56.1 & \textbf{15.8} & 6.63 \\ 
        &\fmplus  & 0.562 & 59.1 & 16.0 & \textbf{0.32} \\ \hline
    \end{tabular}}
    \caption{Performance of \fair, \fmplus, and \abhi across various metrics. The unit of DTPO is in minutes. The best performance in each metric is highlighted in bold.}
    \label{tab:metric}
\end{table}
\textbf{$\bullet$ Spatial distribution distance ($\psi$):} 
we examine the spatial distribution of top-$25$\% and bottom-$25$\% drivers in terms of income in the form of a \textit{heatmap}. We next formulate a metric to quantify the difference between these distributions. 
Specifically, we partition a city into a grid and compute three normalized distributions over the cells of the grid for the top-$25$\% and bottom-$25$\% drivers: \textbf{(1) $\CL$:} their locations when an order was allocated to them, \textbf{(2) $\CR$:} the restaurant location of allocated orders, and \textbf{(3) $\cc$:} the customer drop-off locations. If there are $n$ grid cells, then they give rise to an $n$-dimensional vector with non-negative entries that sum to 1 for each property. We denote by $\alpha_P$ and $\beta_P$ the distributions of top-$25$\%  and bottom-$25$\% drivers respectively on property $P$. Since $\alpha_P$ and $\beta_P$ vectors have the property of probability distributions we compute the distance between them using the\textit{ Total Variation Distance} (c.f.~\cite{Levin-BOOK:2017}), which is guaranteed to be between $0$ and $1$, i.e.,  
\begin{equation}
 \label{eq:HeatmapCoeff}
\psi_P=\frac{1}{2}{\sum_{i=1}^{n}\left\lvert\alpha_P[i] - \beta_P[i]\right\rvert}
\end{equation}
\noindent
\textbf{Operational Constraints: } Several constraints are required while operating food-delivery service. We use the same constraints adopted by\cite{foodmatch}. An order is \emph{rejected} if it remains unallocated for $45$ minutes.
The rejection penalty $\Omega$ (Recall Problem~\ref{prb:online}) is set to $7200$ seconds since most orders are delivered in motorbikes.\\ 
\noindent
\textbf{Simulation Framework} Our dataset contains the exact position of all orders, restaurant and delivery agents. The simulation environment to evaluate the impact of each allocation algorithm, therefore, only involves the order allocation mechanism. Order allocation is a deterministic procedure and hence we neither need to repeat the experiment multiple times, nor report the variance. Our reported results measure the performance on the various described metrics across all 6 days of data.\\
\textbf{Estimation of road network speeds:} We divide $24$-hour period into $24$ one hour slots and then for each slot we compute the weight of each road network edge as the average travel time across all of delivery vehicles in the corresponding road in that slot.\\
\noindent
\textbf{Parameters: }The default size of accumulation window $\Delta$ is $3$ minutes. Clustering parameter $f$, and payment weights $w_1$ and $w_2$ (Def.~\ref{def:income}) are set to $0.8$, $1.0$ and $0.8$ respectively as their default values. 
\begin{table}[t]
    \centering
    \scalebox{0.9}{
    \begin{tabular}{|l|l|l|l|l|l|}
    \hline
        \textbf{City} & \textbf{Algorithm} & \textbf{25 \permil} & \textbf{50 \permil} & \textbf{75 \permil} & \textbf{95 \permil} \\\hline
        \multirow{3}{*}{\textbf{A}}&\fair & 38 & 51.5 & 70 & 88 \\
        & \fmplus & 13 & 29 & 65 & 205.3 \\
        & \abhi $\lambda=1$ & 20 & 44 & 80 & 142 \\ \hline
        \multirow{3}{*}{\textbf{B}}& \fair & 31 & 61 & 92 & 115 \\
        & \fmplus & 12 & 33 & 75 & 264 \\
        & \abhi $\lambda=1$ & 23 & 53 & 90 & 156 \\ \hline
        \multirow{3}{*}{\textbf{C}}& \fair & 44 & 64 & 79 & 95 \\
        & \fmplus & 13 & 31 & 74 & 250 \\ 
        & \abhi $\lambda=1$ & 28 & 52 & 87 & 154 \\ \hline
    \end{tabular}}
    \caption{Number of orders across all six days per driver at various percentiles(\permil).}
     \label{tab:percentile}
\end{table}
\subsection{Comparison with \fmplus and \abhi}
\noindent
Table~\ref{tab:metric} presents the performance of various algorithms across all three cities. In Fig.~\ref{fig:foodmatch_gini_norm}, we observed that \fmplus induces significant income disparity among drivers. This is reflected in the high Gini of \fmplus across all cities in Table~\ref{tab:metric}. In contrast, \fair reduces Gini more than 10 times across three cities. This reduction in Gini, however, does not come at the cost delivery time or SLA violations (SLA-V). Specifically, there is minimal increase in DTPO and SLA-V. Similar to Gini, \fair is also significantly better in Income Gap. Overall, this shows that it is possible to ensure fairness without compromising on the customer experience. 

We also compare with \abhi at $\lambda=1$ and $\lambda=0$. We choose these two $\lambda$ values since they represent the two extremes; at $\lambda=1$, \abhi optimizes only the driver income gap, whereas $\lambda=0$ minimizes only the delivery time. Thus, $\lambda=1$ represents the best possible Gini by \abhi. We observe that even in this scenario, \fair is $9$ times better on average. At $\lambda=0$, although \abhi achieves low delivery times, it fails to satisfy SLA across a large portion of orders.

In Table~\ref{tab:percentile}, we further examine the number of orders delivered by drivers across various percentiles based on normalized income (Def.~\ref{def:income}). It is clear from the data that \fair achieves the most equitable distribution.
\begin{table}[b]
    \centering
    \scalebox{0.9}{
    \begin{tabular}{|l|l|l|l|l|l|l|}
    \hline
        \textbf{City} & \textbf{Locations} &\textbf{\fmplusone} & \textbf{\abhi} & \textbf{\abhi} & \textbf{\fairone} \\
        \textbf{} & \textbf{of the} &\textbf{\fmplustwo} & \textbf{$\lambda=0$} & \textbf{$\lambda=1$} & \textbf{\fairtwo} \\
       \hline
        \multirow{3}{*}{\textbf{A}}& Vehicle & 0.543 & 0.500 & 0.293 & \textbf{0.231} \\
        & Customer & 0.490 & 0.437 & 0.268 & \textbf{0.235} \\ 
        & Restaurant & 0.469 & 0.442 &  \textbf{0.169} & 0.278 \\\hline
        \multirow{3}{*}{\textbf{B}}& Vehicle & 0.394 & 0.394 &  \textbf{0.169} & 0.191 \\
        & Customer & 0.348 & 0.312 &  \textbf{0.152} & 0.182 \\ 
        & Restaurant & 0.354 & 0.342 &  \textbf{0.135} & 0.174 \\\hline
        \multirow{3}{*}{\textbf{C}}& Vehicle & 0.386 & 0.403 &  0.243 & \textbf{0.193} \\
        & Customer & 0.318 & 0.310 &  0.217 & \textbf{0.185} \\
        & Restaurant & 0.318 & 0.382 &  \textbf{0.206} & 0.239 \\ \hline
    \end{tabular}}
    \caption{Comparison of spatial distribution $\psi_P$.}
    \label{tab:heatmap}
    
\end{table}
\subsection{Impact on Spatial Distribution}
\noindent
In Fig.~\ref{fig:foodmatch_heatmap}, we showed that spatial distribution is a key driver of payment inequality. Now we will show that {\em \fair equalizes the spatial distributions across the pay range}. Fig.~\ref{fig:fairfoody_heatmap} studies this question in City C. As visible, the heatmaps of the top-$25\%$ and bottom-$25\%$ are much more similar when compared to Fig.~\ref{fig:foodmatch_heatmap}. This indicates that disparity in spatial distribution is correlated to income disparity.

To quantify this observation and examine whether the pattern holds across all cities, we compute $\psi_P$ across all combinations of cities and properties (Recall Eq.~\ref{eq:HeatmapCoeff}). Table~\ref{tab:heatmap}
 presents the results. Two key observations emerge from this experiments. First, both for \fair and \abhi, $\lambda=1$, the $\psi$ is lower across all cities and properties. This indicates that spatial distribution distance and Gini (as well as Income Gap) are indeed correlated. However, just minimizing spatial distribution distance is not enough in minimizing Gini. Specifically, although \abhi, $\lambda=1$ has a lower spatial distance than \fair in City B, its Gini is significantly higher. (Table~\ref{tab:metric}).
\begin{figure}[t]
\centering  
\subfigure[][Customers]{\includegraphics[width=1.3in]{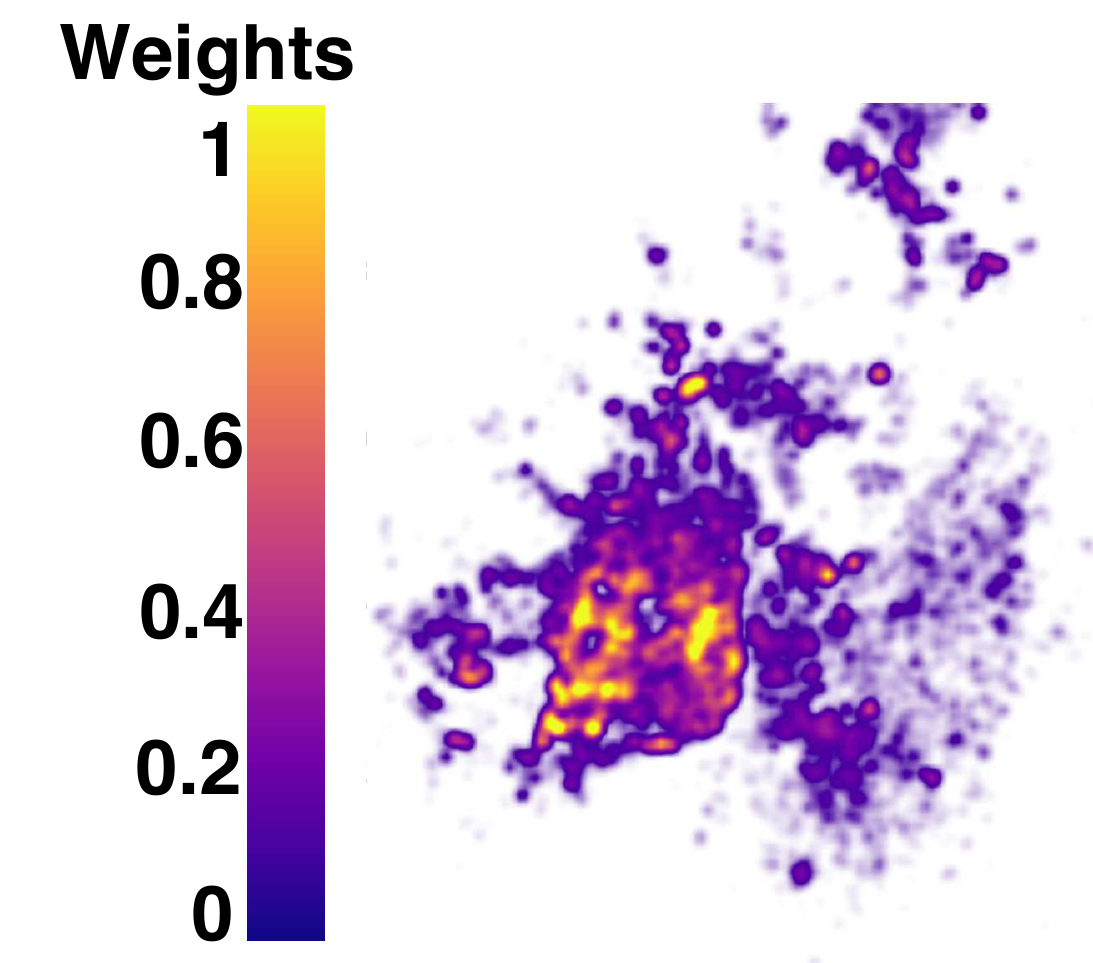}}\hfill
\subfigure[][Restaurants]{\includegraphics[width=1in]{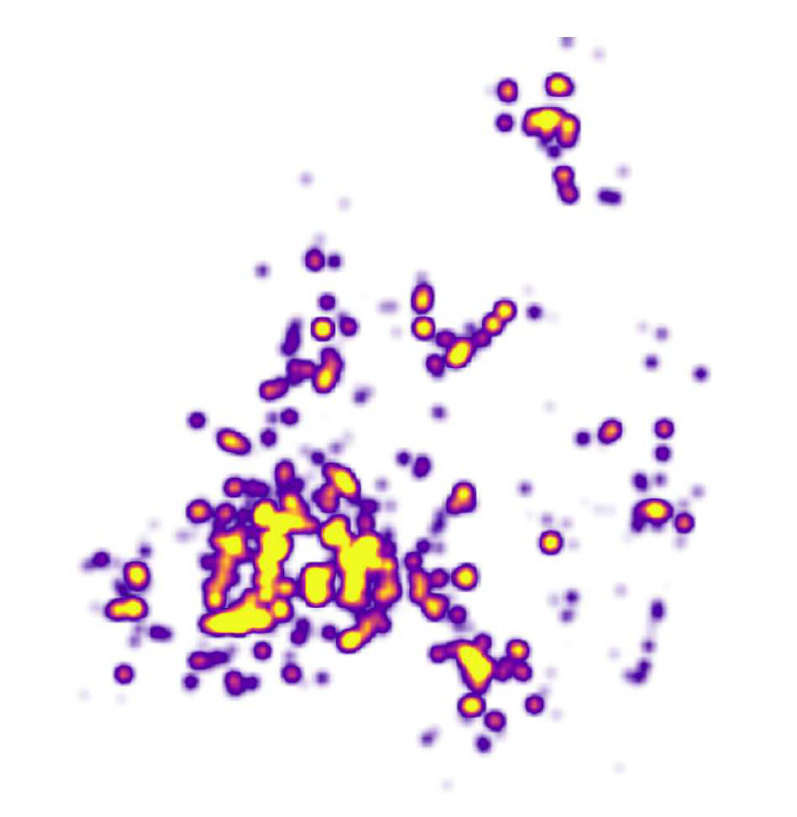}}\hfill
\subfigure[][Delivery Agents]{\includegraphics[width=1in]{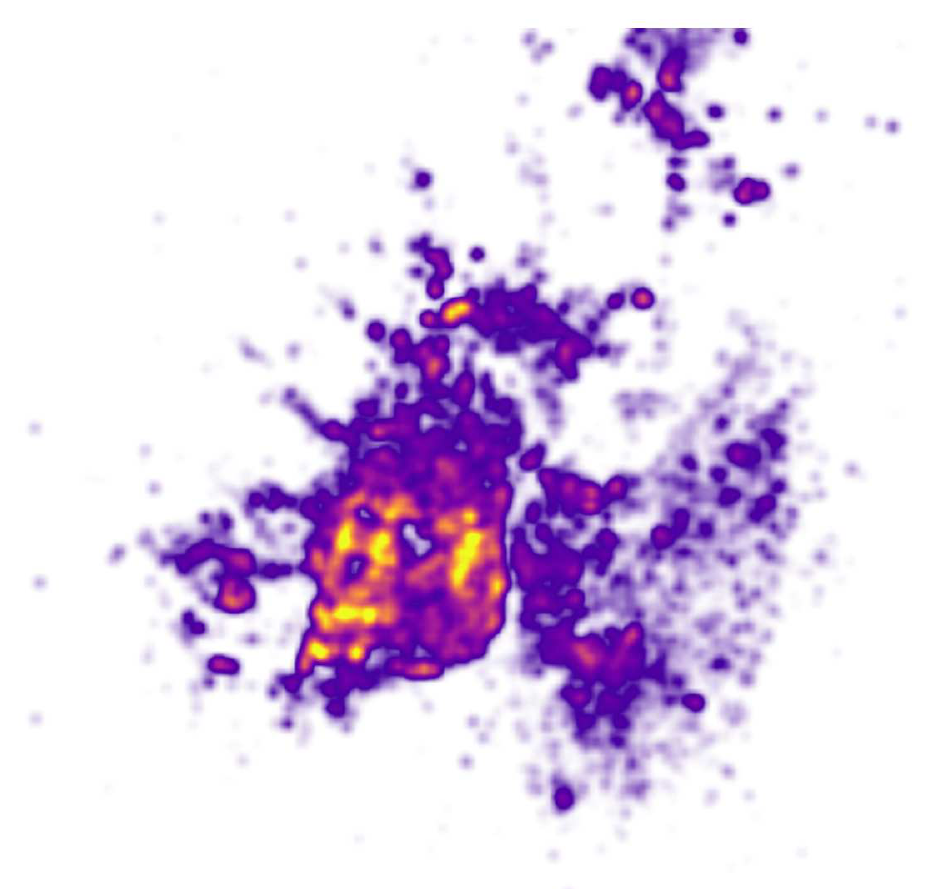}}\\
\subfigure[][Customers]{\includegraphics[width=1in]{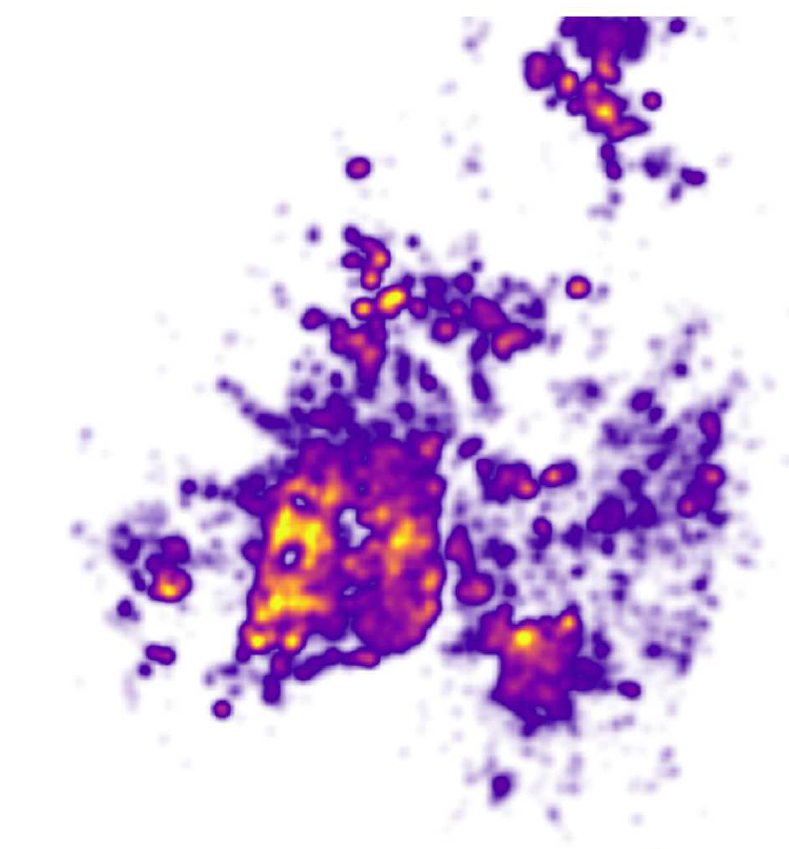}}\hfill
\subfigure[][Restaurants]{\includegraphics[width=1in]{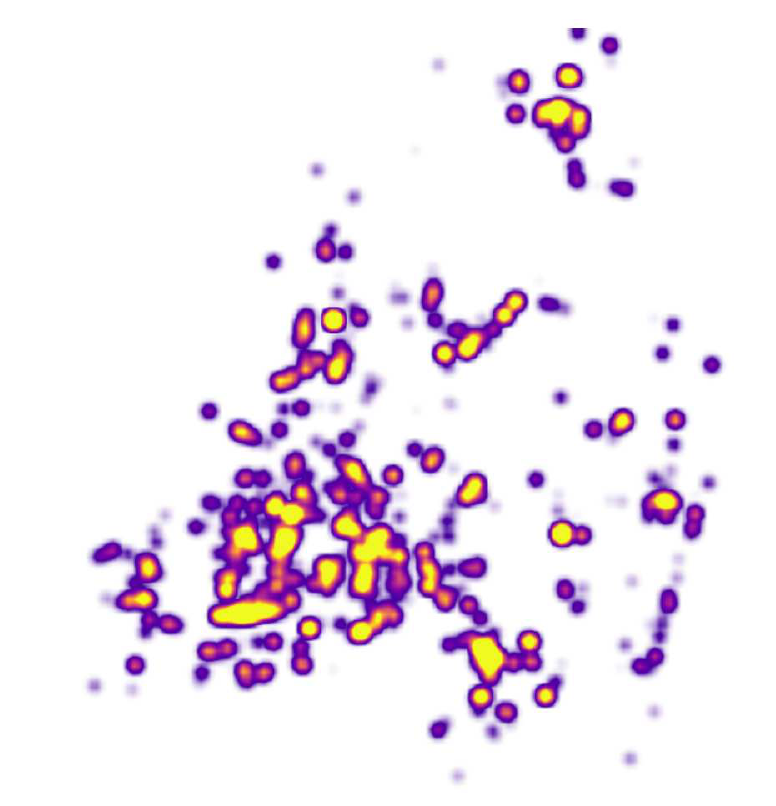}}\hfill
\subfigure[][Delivery Agents]{\includegraphics[width=1in]{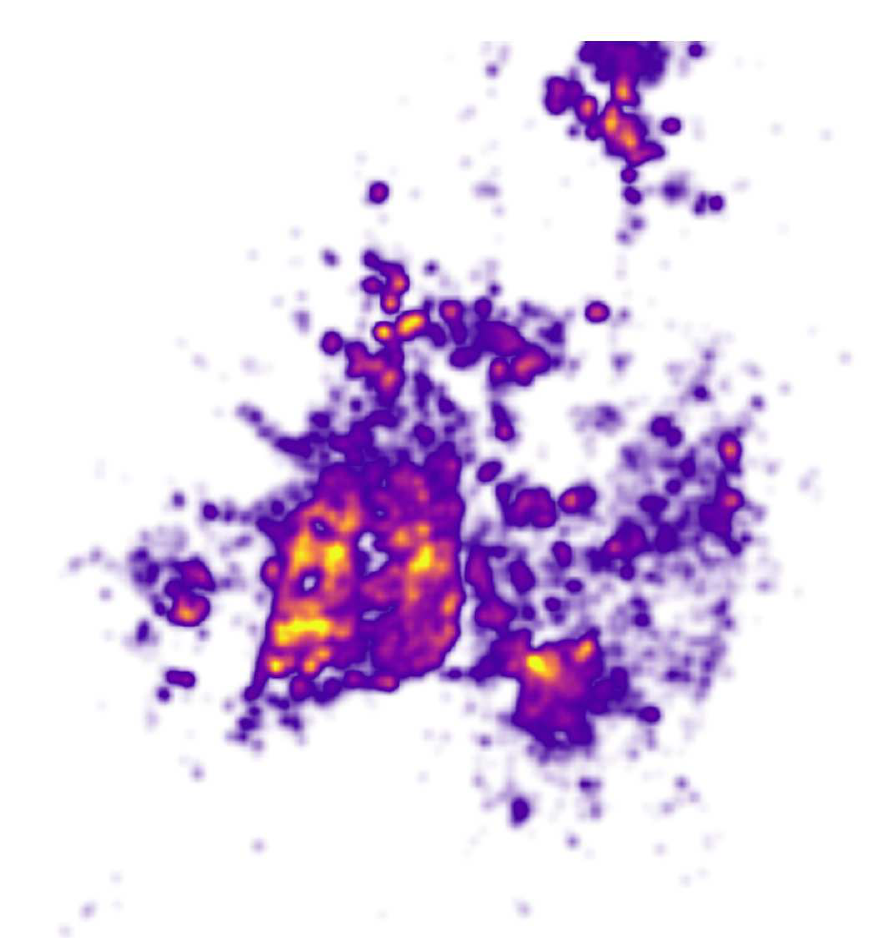}}
\caption{Heat map of order locations after applying \fair. 
In (a-c) we show this location data with respect to the orders serviced by the top $25\%$ agents and in (d-f) those serviced by bottom $25\%$ agents based on income.} 
\label{fig:fairfoody_heatmap}
\end{figure}
\begin{figure*}[t]
\centering  
\subfigure{\label{fig:clustering_variation_gini}\includegraphics[width=0.24\linewidth]{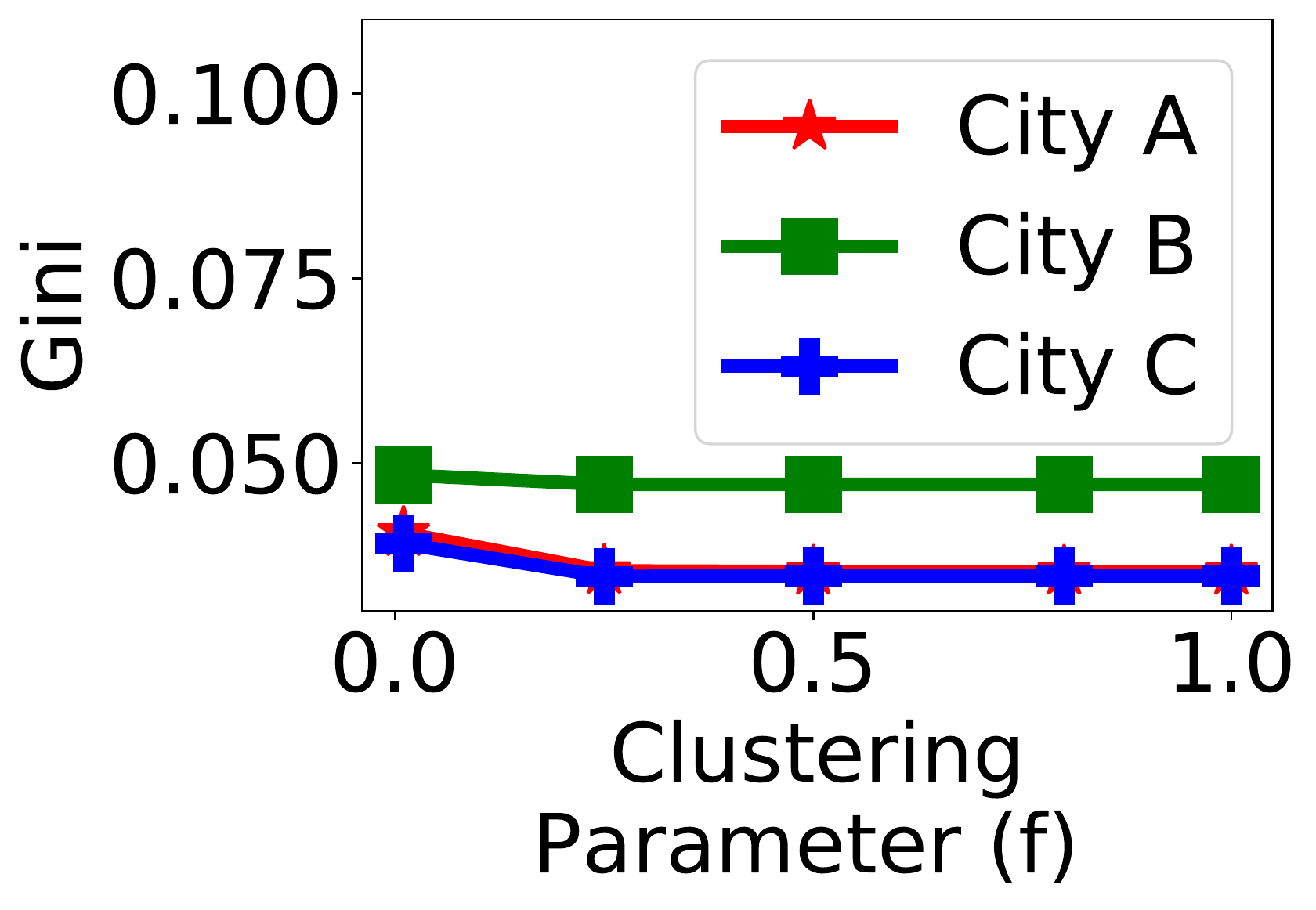}}
\subfigure{\label{fig:clustering_variation_income}\includegraphics[width=0.24\linewidth]{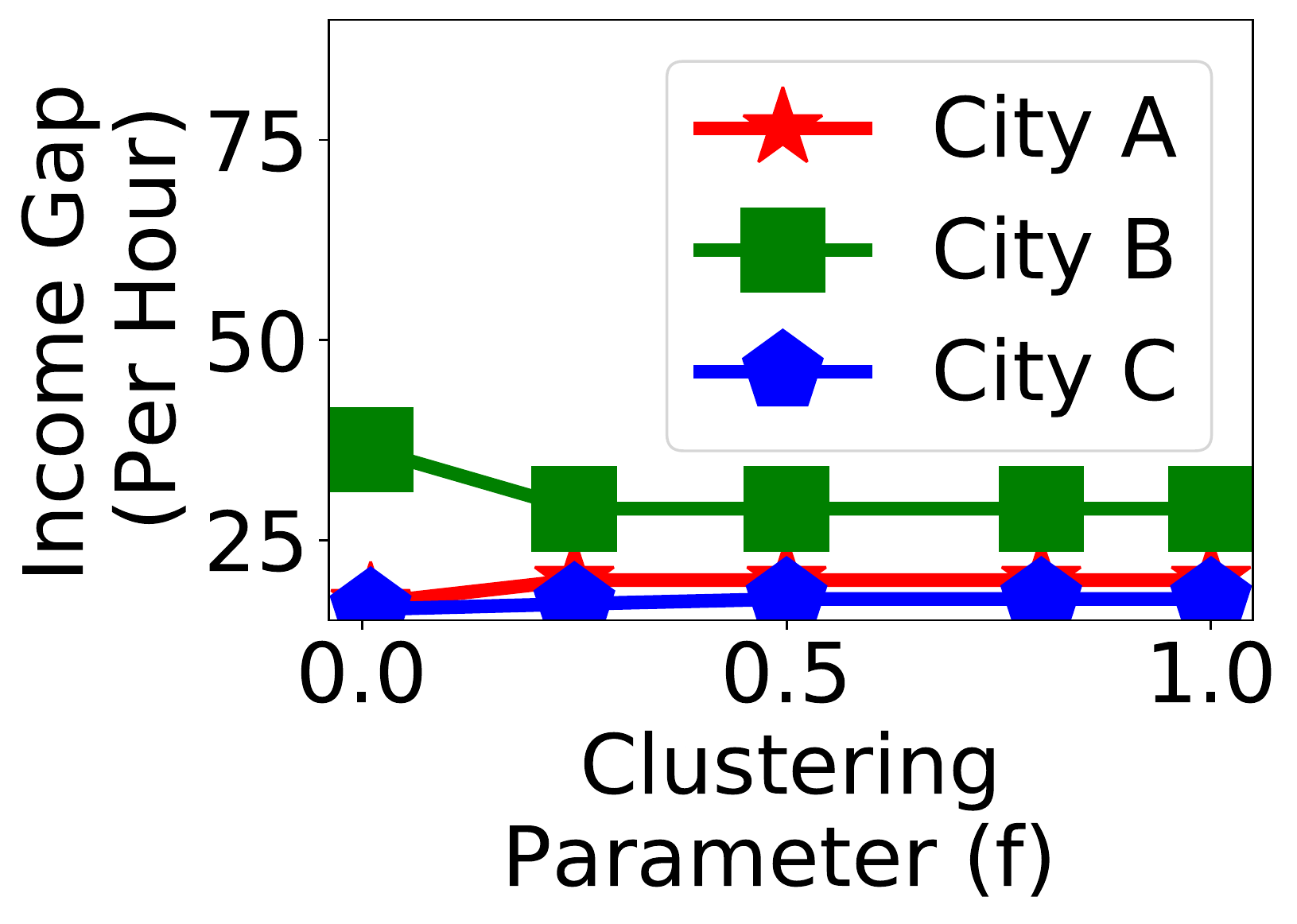}}
\subfigure{\label{fig:clustering_variation_dtpo}\includegraphics[width=0.24\linewidth]{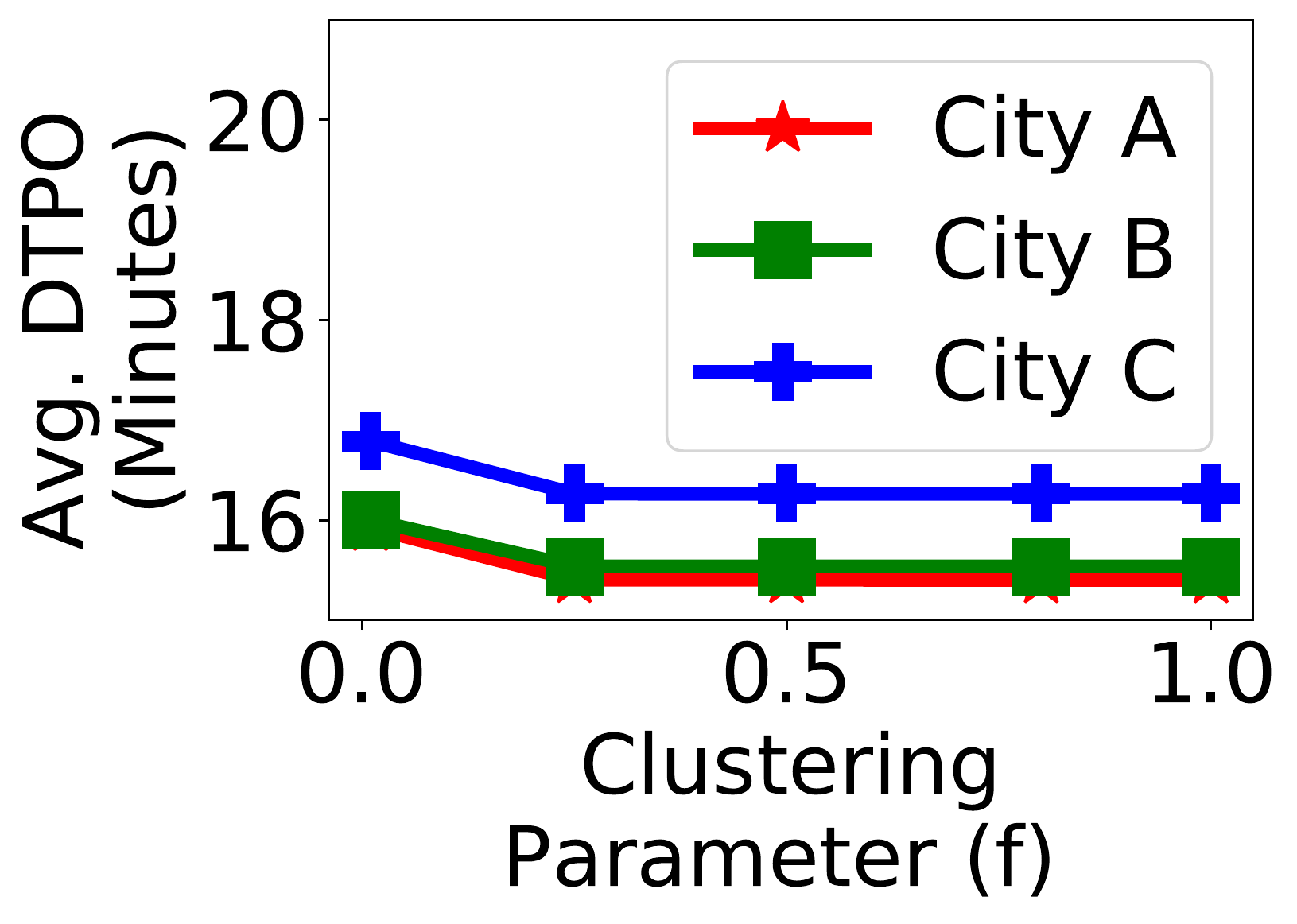}}
\subfigure{\label{fig:clustering_variation_sla}\includegraphics[width=0.24\linewidth]{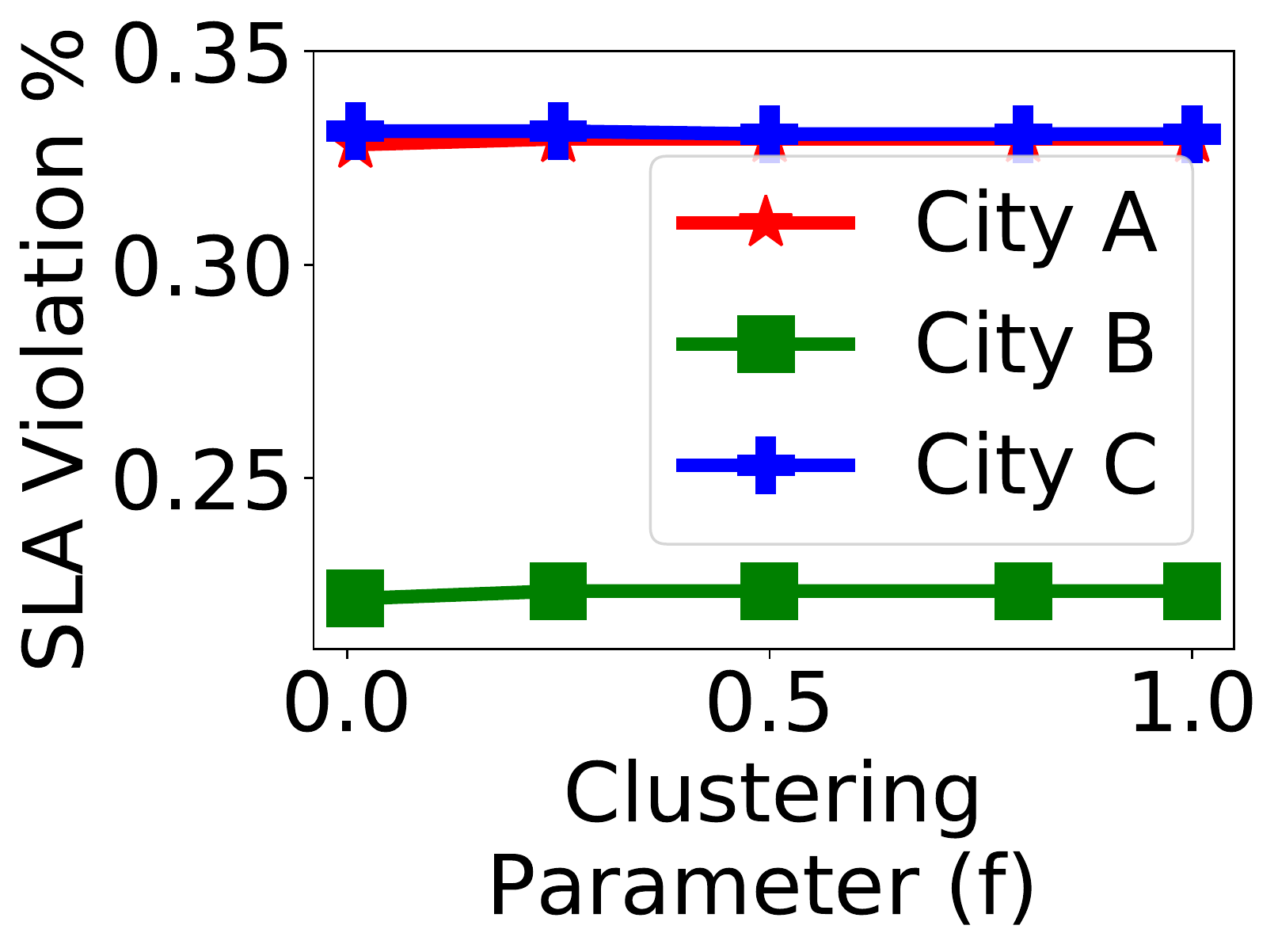}}
\caption{Impact of clustering parameter (f)}
\label{fig:clustering_variation}
\end{figure*}
\begin{figure*}[t]
\centering  
\subfigure{\label{fig:vehicle_variation_gini}
\includegraphics[width=0.23\linewidth]{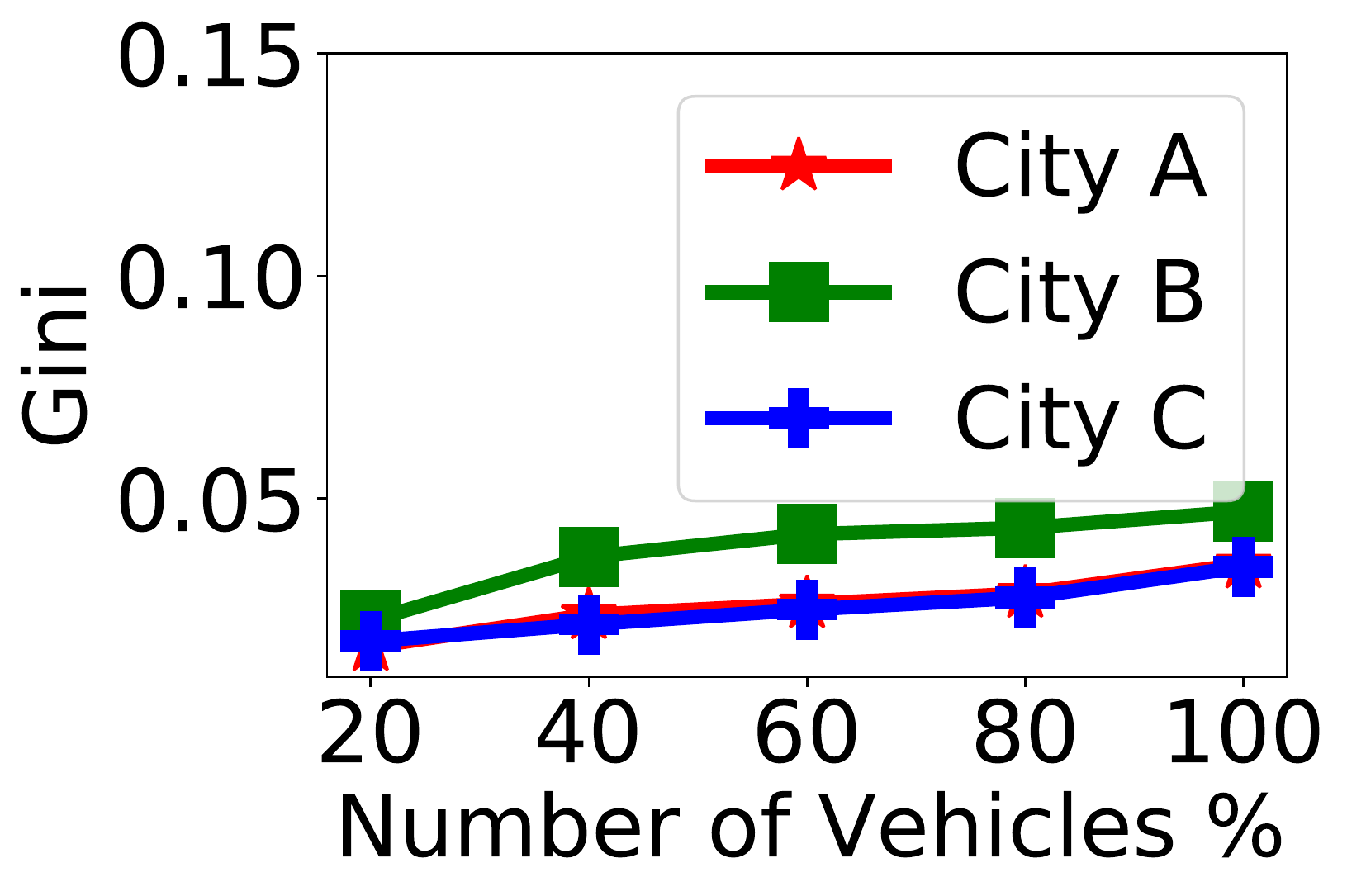}}
\subfigure{\label{fig:vehicle_variation_income}
\includegraphics[width=0.23\linewidth]{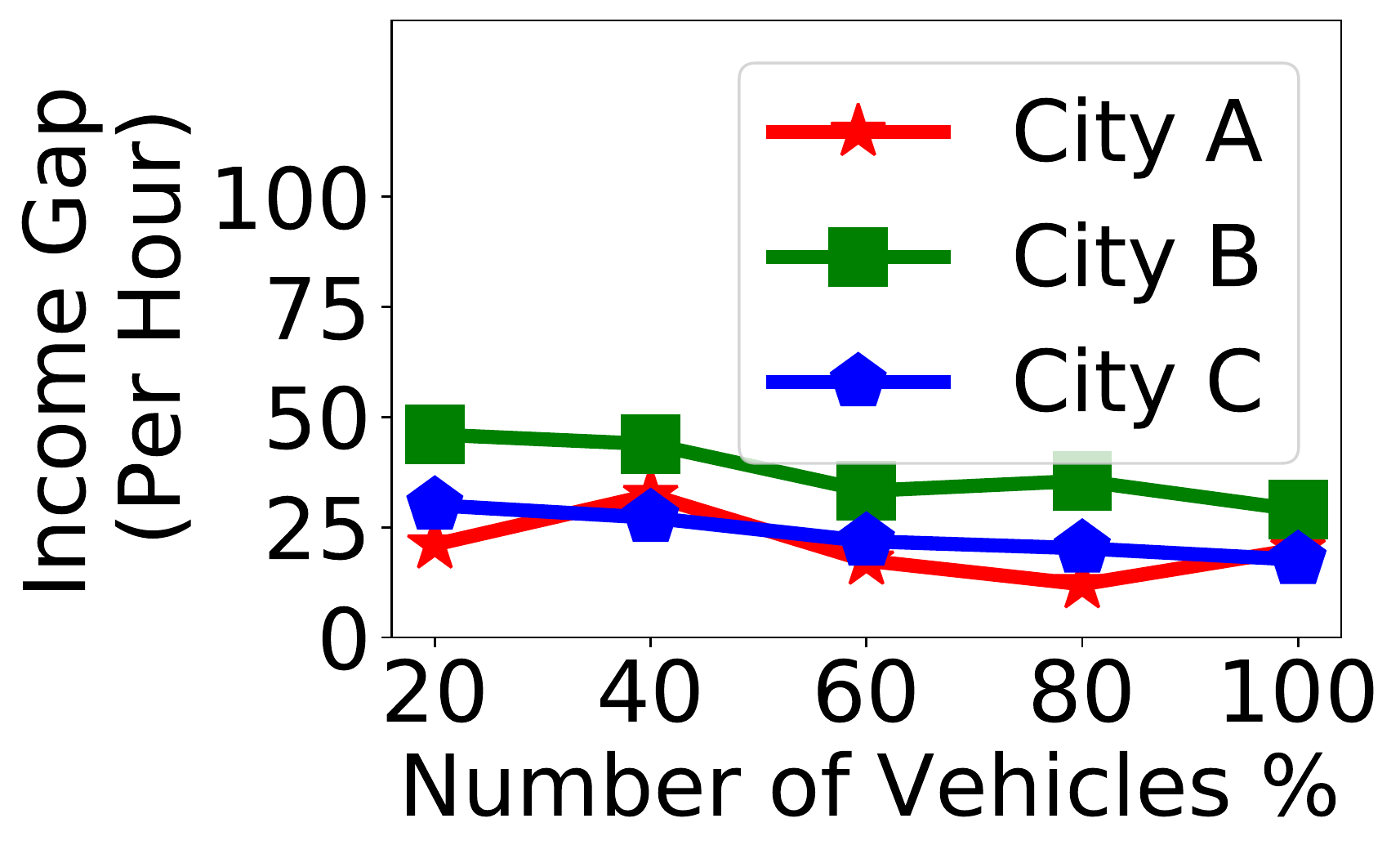}}
\subfigure{\label{fig:vehicle_variation_dtpo}
\includegraphics[width=0.23\linewidth]{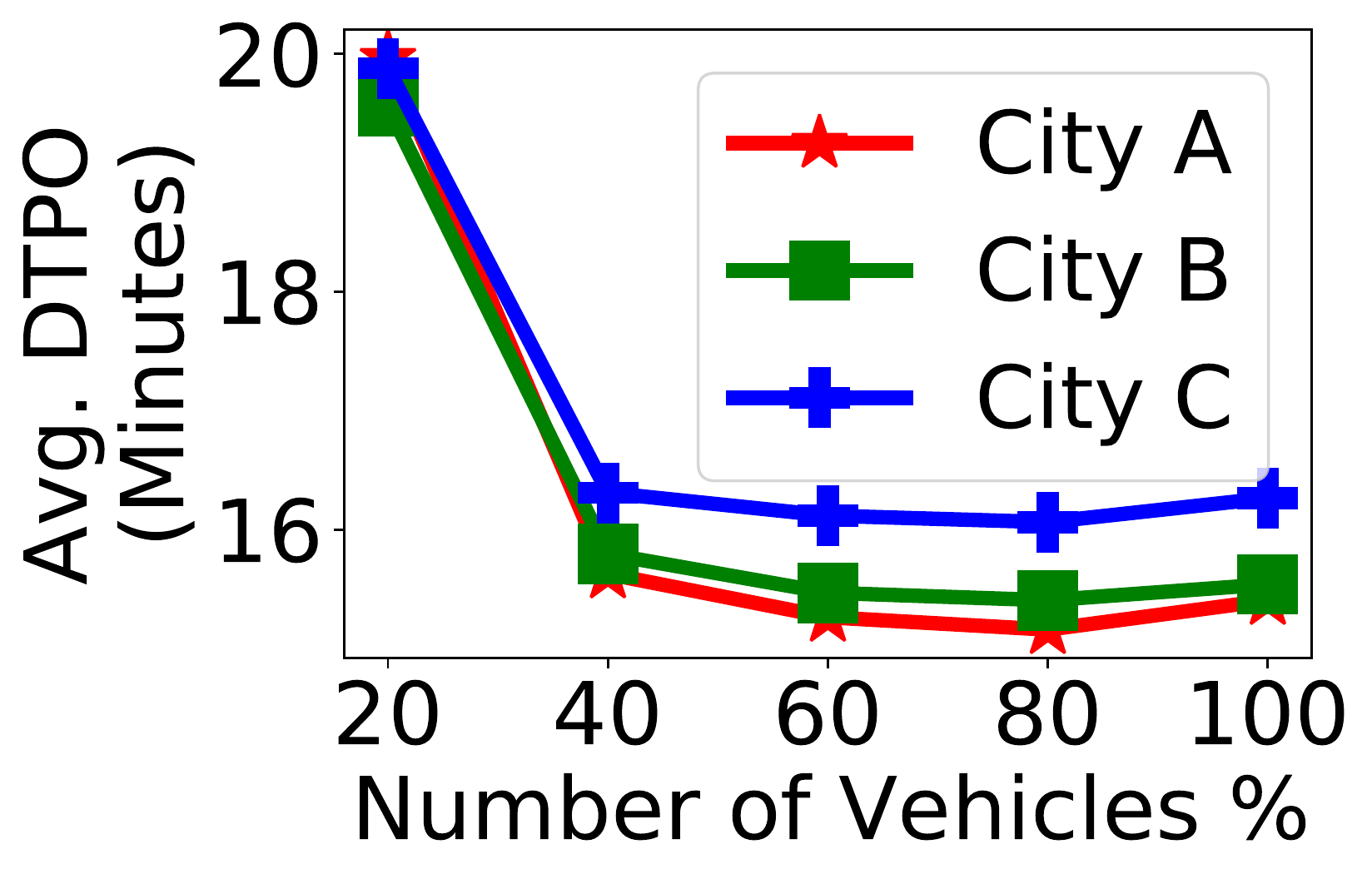}}
\subfigure{\label{fig:vehicle_variation_sla}
\includegraphics[width=0.23\linewidth]{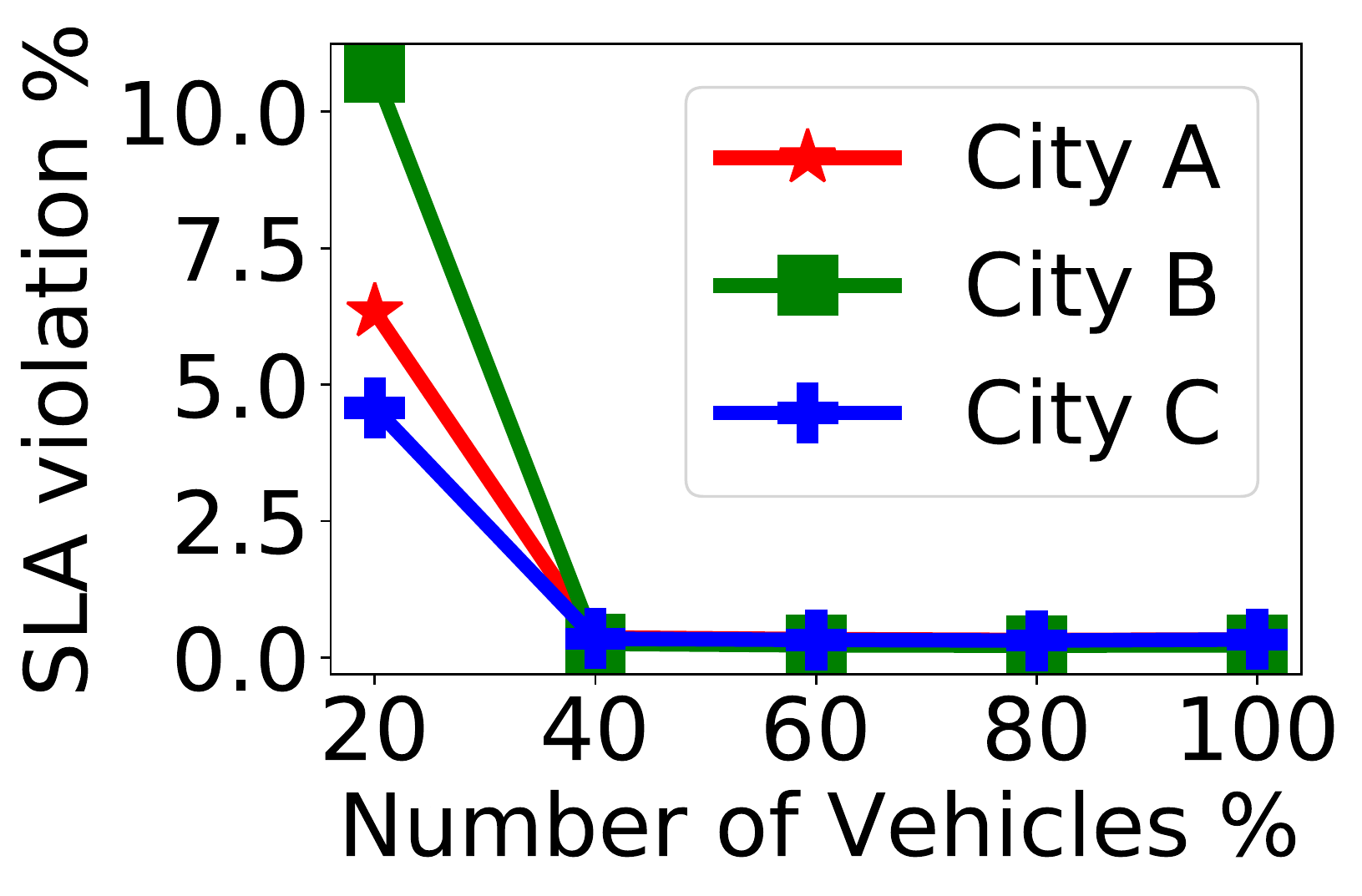}}
\caption{Impact of number of vehicles on various performance metrics.}
\label{fig:vehicle_variation}
\end{figure*}

\subsection{Scalability}
\label{section:scale}
\noindent
In food delivery, since we partition the data stream into windows of length $\Delta$, it is imperative that the allocation time of all orders within this window is smaller than $\Delta$. Otherwise, a queue will build up. We call a window ``overflown'' if the allocation time exceeds $\Delta$. Table~\ref{tab:runtime}, presents the percentage of overflown windows as well as average allocation time across all windows. As visible, all algorithms have $0$ overflown windows and hence are efficient enough for real-world workloads. Nonetheless, \fair has the smallest allocation time among all algorithms. 
\subsection{Impact of Parameters}
\label{sec:parameters}
\noindent
\textbf{Clustering parameter ($f$):}
Lowering of $f$ leads to more clustering among orders. Fig.~\ref{fig:clustering_variation} analyzes variation in different performance metrics against $f$. Both Gini coefficient (Fig.~\ref{fig:clustering_variation_gini}) and Average delivery time per order (Avg. DTPO) (Fig.~\ref{fig:clustering_variation_dtpo}) increase as we decrease $f$. Higher clustering prevents equitable distribution of orders (and, therefore, income). Hence, Gini increases. DTPO increases since larger clusters lead to less simultaneous delivery of orders through multiple drivers. SLA violations remain unaffected Fig.~\ref{fig:clustering_variation_sla}). There is no consistent trend across three cities in income gap Fig.~\ref{fig:clustering_variation_income}).

\noindent
\textbf{Impact of number of vehicles:}  Instead of considering all vehicles that were available in our real dataset, we randomly sample a subset of $X$ vehicles. Next, we vary $X$ in the $x$-axis and observe the impact of various metrics in $y$-axis. Fig.~\ref{fig:vehicle_variation} presents the results. We observe that Gini increases with increase in vehicles (Fig.~\ref{fig:vehicle_variation_gini}). This is expected since decreasing the number of vehicles generates higher demand and more scope to fairly distribute orders. It is also natural that higher vehicle available leads to lower delivery time (Fig.~\ref{fig:vehicle_variation_dtpo}) and less SLA-violations (Fig.~\ref{fig:vehicle_variation_sla}).\\
\subsection{Impact of traffic delays on fairness}
{\color{black}To see the impact of traffic delays on fairness, we synthetically increase the edge weight (travel time) by $50\%$ on $30\%$ of the edges chosen uniformly at random causing unexpected delays on our expected delivery times. Table~\ref{tab:delay_apdx} presents the impact on Gini. While we observe an increase in Gini compared to \fair, the Gini remains low and significantly better than \fmplus. The small increase in Gini happens due to our travel time estimations being violated.}\\
\subsection{Impact of fairness  on drivers' retention}
 {\color{black}Table.~\ref{tab:retention_apdx1} presents the percentage of drivers whose income increased under fair allocation through \fair when compared to their original allocation. As visible, $\approx 75\%$ drivers witness an increase in income. Hence, with fair allocation, retention gets easier.}  
 
 {\color{black}\textit{Do the income-decreasing drivers have higher skill? Will they leave the system due to fair allocation?} }
 {\color{black}
The skill level of a driver can be quantified through \textit{extra delivery time (XDT)} = delivery time - shortest possible delivery time (Def.~\ref{def:sdt}). In Table~\ref{tab:retention_apdx2}, we present the average difference in XDTs of income-increasing drivers under fair allocation with those whose income goes down ($\approx$ top-$25\%$) from the real food-delivery data shared by the service provider. As visible, the difference is minimal indicating similar skill-levels. This results substantiates our earlier claim that higher income is more closely linked with driver location rather than skill level (Fig.~\ref{fig:foodmatch_heatmap} and Table~\ref{tab:heatmap}). }

\begin{table}[t]
    \centering
    \scalebox{0.9}{
    \begin{tabular}{|l|l|p{0.7in}|p{0.8in}|p{0.7in}|}
    \hline
        \textbf{City} & \textbf{Algorithm} & \textbf{ Running time (secs)} & \textbf{Overflow (\%)} & \textbf{Peak Overflow (\%)} \\
      \hline
        \multirow{6}{*}{\textbf{A}} &\fair & \textbf{0.92} & 0 & 0 \\ 
       & \fmplus & 4.77 & 0 & 0 \\
       & \abhi $\lambda=1$ & 2.46 & 0 & 0 \\
        & \abhi $\lambda=0$ & 2.63 & 0 & 0 \\
         \hline
        \multirow{6}{*}{\textbf{B}}  & \fair & \textbf{14.00} & 0 & 0 \\
       & \fmplus & 41.93 & 0 & 0 \\
       & \abhi $\lambda=1$ & 21.00 & 0 & 0 \\
        & \abhi $\lambda=0$ & 20.67 & 0 & 0 \\
         \hline
        \multirow{6}{*}{\textbf{C}} & \fair & \textbf{10.76} & 0 & 0 \\
        & \fmplus & 50.66 & 0 & 0 \\
        & \abhi $\lambda=1$ & 28.21 & 0 & 0 \\
        & \abhi $\lambda=0$ & 26.12 & 0 & 0 \\
        \hline
    \end{tabular}}
    \caption{Comparison of running time and overflow windows. Peak Overflow considers only the lunch (11AM-2PM) and dinner (7PM-11PM) periods.}
    \label{tab:runtime}
\end{table}

\begin{table}[h]
    \centering
    \scalebox{0.9}{
    \begin{tabular}{c | c c c }
    \toprule
        \textbf{City} & \multicolumn{3}{c}{\textbf{Gini}}\\  
         & \fair & \fair + delay & FoodMatch \\
      \midrule
        \textbf{City A} & 0.035&0.08&0.518\\
        \textbf{City B }& 0.047&0.088&0.512\\
        \textbf{City C} & 0.035&0.067&0.562\\
         \bottomrule
    \end{tabular}}
    \caption{Impact of travel delays on Gini.}
    \label{tab:delay_apdx}
\end{table}
\noindent
\begin{table}[t]
    \centering
    \scalebox{0.9}{
    \begin{tabular}{c c c c c c c}
    \toprule
         \textbf{City} & \textbf{D1} & \textbf{D2} & \textbf{D3} & \textbf{D4} & \textbf{D5} & \textbf{D6}  \\
        \midrule
    \textbf{City A}&$74\%$& $75\%$& $73\%$&$73\%$& $71\%$& $72\%$\\
    \textbf{City B}&$74\%$& $75\%$& $73\%$&$74\%$& $72\%$& $72\%$\\
    \textbf{City C}&$78\%$& $79\%$& $77\%$&$78\%$& $77\%$& $76\%$\\
         \bottomrule
    \end{tabular}}
    \caption{Percentage of drivers whose income increased under \fair in each day (D1-D6).}
    \label{tab:retention_apdx1}
\end{table}

\begin{table}[t]
    \centering
    \begin{minipage}{2in}
    \scalebox{0.9}{
    \begin{tabular}{c c}
    \toprule
         \textbf{City} & \textbf{XDT Diff} (secs) \\
        \midrule
    \textbf{City A}&$11$\\
    \textbf{City B}&$44$\\
    \textbf{City C}&$68$\\
         \bottomrule
    \end{tabular}}
        \end{minipage}
    \caption{Performance of drivers in terms of delivery times.}
    \label{tab:retention_apdx2}
\end{table}

\section{Related Work}
\label{sec:relatedwork}
\noindent

\vspace{-1mm}
\noindent {\bf Food order assignment: } 
On the problem of food-delivery, \fmplus~\cite{foodmatch} is the only work to provide a realistic and scalable solution in food delivery domain. Other works on food delivery suffer from various unrealistic assumptions such as perfect information about arrival of orders~\cite{article:mealDelivery}, ignoring the road network~\cite{mdrp}, and ignoring food preparation time~\cite{10.14778/3368289.3368297}.

\vspace{1mm}
\noindent {\bf Fairness in multi-sided platform algorithms: }
With the growing popularity of multi-sided platforms, a number of recent works have investigated the challenges of unfairness and bias in such platforms. For example, \cite{article:racial} looked into the likelihood of racial bias in Airbnb hosts' acceptance of guests, while \cite{article:carrieraddtour} looked at gender discrimination in job advertisements. 
Few works have also looked at how producers and customers treat each other as a group. \cite{Chakraborty2017FairSF} and \cite{suhr2019two} proposed strategies for two-sided fairness in matching situations, whereas 
\cite{Burke} categorised distinct types of multi-stakeholder platforms and their required group fairness qualities. Individual fairness for both producers and customers is addressed by \cite{Patro_2020} in tailored suggestions in two-sided platforms. 
Despite these works on fairness in two-sided platforms, there has not been any studies on food delivery platforms. It is also worth noting that, as discussed in ~\cite{foodmatch_arxiv}, allocation algorithms for the cab service industry~\cite{mdm,yuen-www:2019,rp_insert1,rp_insert2} is not a natural fit food delivery.
\section{Conclusion}
\noindent
In this work, we focused on the unfairness issues faced by  delivery agents in food delivery platforms. Using data from a large real-world food platform, we showed that there exists high inequality in the income earned by the agents. To counter such inequality, we proposed an algorithm \fair to assign delivery agents to orders ensuring that income opportunities are {\it fairly distributed} among the agents. By removing zone restriction, \fair addresses the fact that the spatial spread of orders is a key driver of unequal pay. 
\fair also achieves a more fair pay distribution by amortizing fairness over a reasonable period of time, thereby ensuring that agents who rely only on food delivery for their livelihood are fairly remunerated. 
Extensive experiments show that \fair outperforms state-of-the-art baselines in lowering inequality while ensuring minimal increase in delivery time. Given the increasing adoptions of such platforms, it is the need of the hour and we hope that our work would lead to more followup works in this space.

\textbf{Limitations and Future Work:} Fairness can be studied from other angles as well. For example, do customers all over the city suffer equal extra delivery times? How does the presence of delivery vehicles in the neighborhood of a restaurant affect their order volumes? We plan to study these questions in the future.


\bibliography{mainAAAI}
\newpage
\end{document}